\newcommand{\nostar}[1]{}
\newcommand{\noboxunbox}[1]{}
\renewcommand{\paragraph}[1]{{\bf {#1}}}
\newcommand{\codecolsep}{1ex}
\newcommand{\rlabel}[1]{\mbox{\small{\bf ({#1})}}}
\newcommand{\ttt}[1]{\texttt{#1}}
\newcommand{\out}[1] {}
\newcounter{codeLineCntr}
\newenvironment{codeListingNormal}
 {\setcounter{codeLineCntr}{0}
  \vspace{-.1in}
  \ttfamily\begin{tabbing}}
  {\end{tabbing}
 \vspace{-.1in}
}
\newenvironment{codeListing}
 {\setcounter{codeLineCntr}{0}
  \fontsize{10}{11}
  \vspace{-.1in}
  \ttfamily\begin{tabbing}}
  {\end{tabbing}
\vspace{-.1in}
}
\newenvironment{codeListingS}
 {\setcounter{codeLineCntr}{0}
  \fontsize{9}{9.6}
  \vspace{-.1in}
  \ttfamily\begin{tabbing}}
  {\end{tabbing}
\vspace{-.1in}
}
\newif\ifnotes
\newcommand{\punt}[1]{}
\newcommand{\secref}[1]{Section~\ref{sec:#1}}
\newcommand{\figref}[1]{Figure~\ref{fig:#1}}
\newcommand{\thmref}[1]{Theorem~\ref{thm:#1}}
\renewcommand{\eqref}[1]{Equation~(\ref{eq:#1})}
\newcommand{\proc}[1]{\ifmmode\mbox{\textsc{#1}}\else\textsc{#1}\fi}
\newcommand{\func}[1]{\ifmmode\mathrm{#1}\else\textrm{#1}\fi}
\newcommand{\keywordaccent}[1]{{#1}}
\newcommand{\kfun}{\keywordaccent{fun}\xspace}
\newcommand{\kmfun}{\keywordaccent{mfun}\xspace}
\newcommand{\klet}{\keywordaccent{let}\xspace}
\newcommand{\kin}{\keywordaccent{in}\xspace}
\newcommand{\kend}{\keywordaccent{end}\xspace}
\newcommand{\kif}{\keywordaccent{if}\xspace}
\newcommand{\kmif}{\keywordaccent{mif}\xspace}
\newcommand{\kthen}{\keywordaccent{then}\xspace}
\newcommand{\kelse}{\keywordaccent{else}\xspace}
\newcommand{\kreturn}{\keywordaccent{return}\xspace}
\newcommand{\kcase}{\keywordaccent{case}\xspace}
\newcommand{\kof}{\keywordaccent{of}\xspace}
\newcommand{\kand}{\keywordaccent{and}\xspace}
\newcommand{\knil}{\keywordaccent{nil}\xspace}
\newcommand{\kNIL}{\keywordaccent{NIL}\xspace}
\newcommand{\kcons}{\keywordaccent{cons}}
\newcommand{\kCONS}{\keywordaccent{CONS}}
\newcommand{\update}[3]{\mathcd{update\cdparens{\ensuremath{{#1},{#2},{#3}}}}}
\newcommand{\ot}[2]{\ensuremath{~;~}}
\newcommand{\emptyContext}{\emptyset}
\newcommand{\ms}{\ensuremath{\sigma}}
\newcommand{\msp}{\ms'}
\newcommand{\mspp}{\ms''}
\newcommand{\msi}{\ms_1}
\newcommand{\msii}{\ms_2}
\newcommand{\MS}{\ensuremath{\Sigma}}
\newcommand{\MSP}{\MS'}
\newcommand{\MSPP}{\MS''}
\newcommand{\MSi}{\MS_1}
\newcommand{\MSii}{\MS_2}
\newcommand{\dom}[1]{\mathop{\textrm{dom}}(\ensuremath{#1})}
\newcommand{\br}{\beta}
\newcommand{\ev}{\varepsilon}
\newcommand{\bangev}[1]{\mathcd{!}{#1}}
\newcommand{\inlev}{\mathcd{inl}}
\newcommand{\inrev}{\mathcd{inr}}
\newcommand{\nullbr}{\bullet}
\newcommand{\consbr}[2]{{#1}\cdot{#2}}
\newcommand{\extbr}[2]{{#1}\mathop{\widehat{\ }}{#2}}
\newcommand{\bangbr}[2]{\extbr{#1}{\bangev{#2}}}
\newcommand{\inlbr}[1]{\extbr{#1}{\inlev}}
\newcommand{\inrbr}[1]{\extbr{#1}{\inrev}}
\newcommand{\atbr}[2]{{#1}\mathbin{@}{#2}}
\newcommand{\gbr}{\gamma}
\newcommand{\gev}{\epsilon}
\newcommand{\callgev}[1]{\cdparens{#1}}
\newcommand{\banggev}[1]{\mathcd{!}{#1}}
\newcommand{\inlgev}[1]{\mathcd{inl}\cdparens{#1}}
\newcommand{\inrgev}[1]{\mathcd{inr}\cdparens{#1}}
\newcommand{\pairgev}[2]{\langle #1,#2\rangle}
\newcommand{\nullgbr}{\nullbr}
\newcommand{\consgbr}[2]{\consbr{#1}{#2}}
\newcommand{\extgbr}[2]{\extbr{#1}{#2}}
\newcommand{\callgbr}[2]{\extgbr{#1}{\callgev{#2}}}
\newcommand{\banggbr}[2]{\extgbr{#1}{\banggev{#2}}}
\newcommand{\splitgbr}[3]{\extgbr{#1}{\pairgev{#2}{#3}}}
\newcommand{\inlgbr}[2]{\extgbr{#1}{\inlgev{#2}}}
\newcommand{\inrgbr}[2]{\extgbr{#1}{\inrgev{#2}}}
\newcommand{\atgbr}[2]{{#1}\mathbin{@}{#2}}
\newcommand{\simp}[1]{{#1}^{\circ}}
\newcommand{\reduces}[1]{\Downarrow^{#1}}
\newcommand{\treduces}{\mathrel{\reduces{\mathsf{t}}}}
\newcommand{\ereduces}{\mathrel{\reduces{\mathsf{e}}}}
\newcommand{\treducespure}{\mathrel{\reduces{\mathsf{t}}_{\mathsf{p}}}}
\newcommand{\ereducespure}{\mathrel{\reduces{\mathsf{e}}_{\mathsf{p}}}}
\newcommand{\tsi}{\ensuremath{\ts}}
\newcommand{\tso}{\ensuremath{\ts_{o}}}
\newcommand{\eis}[4]{{#1},{#2{:}#3},{#4}}
\newcommand{\eos}[2]{{#1},{#2}}
\newcommand{\tis}[2]{{#1},{#2}}
\newcommand{\tos}[2]{{#1},{#2}}
\newcommand{\MFL}{\textsf{MFL}\xspace}
\newcommand{\mfl}{\MFL}
\newcommand{\taui}{\tau_1}
\newcommand{\tauii}{\tau_2}
\renewcommand{\u}{u}
\renewcommand{\circle}{\bigcirc}
\newcommand{\ra}{\rightarrow}
\renewcommand{\t}{t}
\newcommand{\ti}{t_1}
\newcommand{\tii}{t_2}
\newcommand{\e}{e}
\newcommand{\ei}{e_1}
\newcommand{\eii}{e_2}
\renewcommand{\v}{v}
\newcommand{\vi}{v_1}
\newcommand{\vii}{v_2}
\newcommand{\vp}{v'}
\newcommand{\osum}{+}
\newcommand{\tunit}{1}
\newcommand{\uno}{\star}
\newcommand{\rt}[2]{\mu {#1}.{#2}}
\newcommand{\roll}[1]{\mathcd{roll}\cdparens{#1}}
\newcommand{\unroll}[1]{\mathcd{unroll}\cdparens{#1}}
\newcommand{\cdparens}[1]{\mathcd{(}{#1}\mathcd{)}}
\newcommand{\cdabracks}[1]{\ensuremath{\langle{#1}\rangle}}
\newcommand{\cdcomma}{\mathcd{,}}
\newcommand{\cdcolon}{\mathcd{:}}
\newcommand{\Integer}{\mathcd{int}}
\newcommand{\Int}{\Integer}
\newcommand{\Plus}{\mathcd{+}}
\newcommand{\Minus}{\mathcd{-}}
\newcommand{\pair}[2]{\cdabracks{#1,#2}}
\newcommand{\inl}[3]{\mathcd{inl}_{#2 \osum #3}{#1}}
\newcommand{\inr}[3]{\mathcd{inr}_{#2 \osum #3}{#1}}
\newcommand{\casearrow}{\Rightarrow}
\newcommand{\caseofflat}[7]{\ensuremath{\nmcase\,{#1}\,\mathcd{of}\,\mathcd{inl}\,\cdparens{#2\cdcolon #3}\,\casearrow{}\,{#4}\,\mathcd{|\,inr}\,\cdparens{#5\cdcolon #6}\,\casearrow\,{#7}\,\mathcd{end}}}
\renewcommand{\xi}{x_1}
\newcommand{\ai}{a_1}
\newcommand{\aii}{a_2}
\newcommand{\nletx}{\ensuremath{\mathcd{let*}}\xspace}
\newcommand{\nletbang}{\ensuremath{\mathcd{let!}}\xspace}
\newcommand{\nmcase}{\ensuremath{\mathcd{mcase}}\xspace}
\newcommand{\nreturn}{\ensuremath{\mathcd{return}}\xspace}
\newcommand{\letbin}[3]{\mathcd{let}\,\mathcd{!}\,{#1}\,\mathcd{be}\,{#2}\,\mathcd{in}\,{#3}\,\mathcd{end}}
\newcommand{\letpin}[4]{\mathcd{let}\,{#1}{\cross}{#2}\,\mathcd{be}\,{#3}\,\mathcd{in}\,{#4}\,\mathcd{end}}
\newcommand{\ifthenelse}[3]{\mathcd{if}\,{#1}\,\mathcd{then}\,{#2}\,\mathcd{else}\,{#3}}
\newcommand{\ite}[3]{\ifthenelse{#1}{#2}{#3}}
\newcommand{\ret}[1]{\mathcd{return}\cdparens{#1}}
\newcommand{\fun}[5]{\ensuremath{\mathcd{mfun}\,{#1}\,\cdparens{#2\cdcolon #3}\cdcolon{#4}\,\mathcd{is}\,{#5}\,\mathcd{end}}}
\newcommand{\funval}[6]{\ensuremath{\mathcd{mfun}_{\,#1}\,{#2}\cdparens{{#3}\cdcolon{#4}}\cdcolon{#5}\,\mathcd{is}\,{#6}\,\mathcd{end}}}
\newcommand{\oper}{o}
\newcommand{\op}[1]{o\cdparens{#1}}
\newcommand{\apply}[2]{\ensuremath{{#1}\,{#2}}}
\newcommand{\OpApply}[2]{\mathcd{app}\cdparens{#1\cdcomma #2}}
\newcommand{\bang}[1]{\mathcd{!}\,{#1}}
\newcommand{\bxed}[1]{{#1}\,\mathcd{box}}
\newcommand{\D}{\Delta}
\newcommand{\G}{\Gamma}
\newcommand{\cc}{{:}}
\newcommand{\ers}[1]{\ensuremath{{#1}^{-}}}
\title{\vspace{0.5in}Selective Memoization\thanks{This research was supported in part by NSF
    grants CCR-9706572, CCR-0085982, and CCR-0122581.}}
\author{ 
Umut A. Acar\footnote{\ttt{umut@mpi-sws.org} Max-Planck Institute for
  Software Systems.}
\and 
Guy E. Blelloch\footnote{\ttt{blelloch@cs.cmu.edu} Carnegie Mellon University. Pittsburgh, PA, USA.}
\and 
Robert Harper\footnote{\ttt{rwh@cs.cmu.edu} Carnegie Mellon University. Pittsburgh, PA, USA.}
}
\begin{document}
\maketitle
\vspace{0.5in}
\begin{abstract} 

  This paper presents language techniques for applying memoization
  selectively.  The techniques provide programmer control over
  equality, space usage, and identification of precise dependences so
  that memoization can be applied according to the needs of an
  application. Two key properties of the approach are that it accepts
  and efficient implementation and yields programs whose performance
  can be analyzed using standard analysis techniques.

  We describe our approach in the context of a functional language
  called \mfl and an implementation as a Standard ML library.  The
  \mfl language employs a modal type system to enable the programmer
  to express programs that reveal their true data dependences when
  executed.  We prove that the \mfl language is sound by showing that
  that \mfl programs yield the same result as they would with respect
  to a standard, non-memoizing semantics.  The SML implementation
  cannot support the modal type system of \mfl statically but instead
  employs run-time checks to ensure correct usage of primitives.

\end{abstract}

\section{Introduction} 
\label{sec:intro}

Memoization is a fundamental and powerful technique for result re-use.
It dates back a half century \citep{Bellman57, McCarthy63, Michie68}
and has been used extensively in many areas such as dynamic
programming~\citep{AhoHoUl74,Cohen83,CormenLeRi90,LiuSt99}, incremental
computation (e.g.,
~\citep{DemersReTe81,PughTe89,AbadiLaLe96,LiuStTe98,HeydonLeYu00}), and
others~\citep{Bird80,MostowCo85,Hughes85,Norvig91,LiuStTe98}.  In fact,
lazy evaluation provides a limited form of
memoization~\citep{Peyton-Jones87}.

Although memoization can dramatically improve performance and can
require only small changes to the code, no language or library support
for memoization has gained broad acceptance.  Instead, many successful
uses of memoization rely on application-specific support code.  The
underlying reason for this is one of control: since memoization is all
about performance, the user must be able to control the performance of
memoization.  Many subtleties of memoization, including the cost of
equality checking and the cache replacement policy for memo tables,
can make the difference between exponential and linear running time.

To be general and widely applicable a memoization framework must
provide control over these three areas: (1) the kind and cost of
equality tests; (2) the identification of precise dependences between
the input and the output of memoized code; and (3) space
management. Control over equality tests is critical, because this is
how re-usable results are identified.  Identifying precise dependences
is important to maximize result reuse.  Being able to control when
memo tables or individual entries are purged is critical, because
otherwise the user will not know whether or when results are re-used.

In this paper, we propose techniques for memoization that provide
control over equality and identification of dependences, and some
control over space management.  We study the techniques in the context
of a small language called \mfl, which is a purely functional language
enriched with support for user-controlled, selective memoization.  We
give several examples of the use of the language and we prove its type
safety and correctness---{\em i.e.,} that the semantics are preserved
with respect to a non-memoized version.  The operational semantics of
\mfl specifies the performance of programs accurately enough to
determine (expected) asymptotic time bounds.\footnote{Expected, rather
  than worst-case, performance is required because of our reliance on
  hashing.}  As an example, we show how to analyze the performance of
a memoized version of Quicksort.  The \mfl language accepts an
efficient implementation with expected constant-time overhead by
representing memo tables with nested hash tables.  We give an
implementation of \mfl as a library for the Standard ML language.  The
implementation cannot support the modal type system of \mfl
statically; instead, it relies on run-time checks to ensure correct
usage of memoization primitives.

In the next section we describe background and related work.  In
\secref{framework} we introduce our approach via some examples.  In
\secref{mfl} we formalize the \MFL{} language and discuss its safety,
correctness, and performance properties.  In \secref{imp} we present a
simple implementation of the framework as a Standard ML library.  In
\secref{discussion} we discuss several ways in which the approach may
be extended.  

This paper extends the conference version~\citep{AcarBlHa03} with the
proofs for the correctness of the proposed approach and with a more
detailed description of the implementation.  Although the
implementation provided here is in the form of a simple library, some
of the techniques proposed here have been implemented in
CEAL~\citep{HammerAcCh09} and Delta ML~\citep{Ley-WildFlAc08,AcarLW09}
languages that provide direct support for self-adjusting computation.

\section{Background and Related Work}
\label{sec:background}

A typical memoization scheme maintains a memo table mapping argument
values to previously computed results.  This table is consulted before
each function call to determine if the particular argument is in the
table.  If so, the call is skipped and the result is returned;
otherwise the call is performed and its result is added to the table.
The semantics and implementation of the memo lookup are critical to
performance.  Here we review some key issues in implementing
memoization efficiently.

\subsection{Equality}
Any memoization scheme needs to search a memo table for a match to the
current arguments.  Such a search, at minimum, requires a test for
equality.  Typically it also requires some form of hashing.  In
standard language implementations testing for equality on structures,
for example, can require traversing the whole structure.  The cost of
such an equality test can negate the advantage of memoizing and may
even change the asymptotic behavior of the function.  A few approaches
have been proposed to alleviate this problem.  The first is based on
the fact that for memoization equality need not be exact---it can
return unequal when two arguments are actually equal.  The
implementation could therefore decide to skip the test if the equality
is too expensive, or could use a conservative equality test, such as
``location'' equality.  The problem with such approaches is that
whether a match is found could depend on particulars of the
implementation and will surely not be evident to the programmer.

Another approach for reducing the cost of equality tests is to ensure
that there is only one copy of every value, via a technique known as
``hash consing''~\citep{GotoKa76,Allen78,SpitzenLe78}.  If there is
only one copy, then equality can be implemented by comparing
locations.  In fact, the location can also be used as a key to a hash
table.  In theory, the overhead of hash-consing is constant in the
expected case (expectation is over internal randomization of hash
functions).  In practice, hash-consing can be expensive because of
large memory demands and interaction with garbage collection.  In
fact, several researchers have argued that hash-consing is too
expensive for practical
purposes~\citep{Pugh88,AppelGo93,MurphyHaCr02}.  As an
alternative to hash consing, Pugh proposed lazy structure
sharing~\citep{Pugh88}.  In lazy structure sharing whenever two equal
values are compared, they are made to point to the same copy to speed
up subsequent comparisons.  As Pugh points out, the disadvantage of
this approach is that the performance depends on the order comparisons
and can therefore be difficult to analyze.

We note that even with hash-consing, or any other method, it remains
critical to define equality on all types including reals and
functions.  Claiming that functions are never equivalent, for example,
is not satisfactory because the result of a call involving some
function as a parameter will never be re-used.

\subsection{Precise dependences}
To maximize result re-use, the result of a function call must be
stored with respect to its true dependences.  This issue arises when
the function examines only parts or an approximation of its parameter.
To enable ``partial'' equality checks, the unexamined parts of the
parameter should be disregarded.  To increase result re-use, the
programmer should be able to match on the approximation, rather than
the parameter itself.  As an example, consider the code
\begin{verbatim}
  fun f(x,y,z) = if (x > 0) then fy(y) else fz(z)
\end{verbatim}
\noindent
The result of {\tt f} depends on either {\tt (x,y)} or {\tt (x,z)}.
Also, it depends on an approximation of {\tt x} (whether or not it is
positive) rather than its exact value.  For example, the memo entry
for \ttt{f(7,11,20)} should match the calls \ttt{f(7,11,30)} and
\ttt{f(4,11,50)}, since when {\tt x} is positive, the result depends
only on {\tt y}.

Several researchers have remarked that partial matching can be very
important in some
applications~\citep{PenningsSwVo92,Pennings94,AbadiLaLe96,HeydonLeYu00}.
Abadi, Lampson, L\'evy~\citep{AbadiLaLe96}, and Heydon, Levin,
Yu~\citep{HeydonLeYu00} have suggested program analysis methods for
tracking dependences for this purpose.  Although their technique is
likely effective in catching potential matches, it does not provide a
programmer controlled mechanism for specifying what dependences should
be tracked.  Also, their program analysis technique can change the
asymptotic complexity of a program, making it difficult to asses the
effects of memoization.

\subsection{Space management} 
Another problem with memoization is its space requirement.  As a
program executes, its memo tables can become large and limit the
utility of memoization.  To alleviate this problem, memo tables or
individual entries should be disposed of under programmer control.

In some applications, such as in dynamic programming, most result
re-use occurs among the recursive calls of some function.  Thus, the
memo table of such a function can be disposed of whenever it
terminates.  This can be achieved by associating a memo table with a
each memoized function and reclaiming the table when the function goes
out of scope~\citep{CookLa97,Hughes85}.

In other applications, where result re-use is less structured,
individual memo table entries should be purged according to a
replacement policy~\citep{Hilden76,Pugh88}.  The problem is to
determine what exact replacement policy should be used and to analyze
the performance effects of the chosen policy.  One widely used
approach is to replace the least recently used entry.  Other, more
sophisticated, policies have also been suggested~\citep{Pugh88}.  In
general the replacement policy must be application-specific, because,
for any fixed policy, there are programs whose performance is made
worse by that choice~\citep{Pugh88}.

\subsection{Memoization and dynamic dependence graphs}
The techniques presented in this paper were motivated by our previous
work on adaptive computation~\citep{AcarBlHa02}.  In subsequent
work~\citep{Acar05,AcarBlBlHaTa09,AcarBlDo07}, we showed that
memoization and adaptive computation techniques are duals in the way
that they provide for computation re-use.  Based on this duality, we
showed that they can be combined to provide an incremental-computation
technique, called self-adjusting computation, that achieves efficiency
for a reasonably broad range of applications.  Perhaps one of the most
interesting aspects of this combination is that it enables re-use of
computations under mutations to
memory~\citep{AcarBlBlHaTa06,AcarBlDo07}, which turns out to be
critically important for effective re-use of computations. This later
work subsequently has led to the development of
CEAL~\citep{HammerAcCh09} and Delta ML
languages~\citep{Ley-WildFlAc08,AcarLW09} for self-adjusting
computation.  These languages provide language constructs to enable
the user to memoize expression as needed and support creation of
locally scoped memo tables.  Delta ML additionally supports
user-provided equality tests.

\section{A Framework for Selective Memoization}
\label{sec:framework}

We present an overview of our approach via several examples.  The
examples are written in an language that extends a purely functional,
ML-like language with selective-memoization primitives. We formalize
the core of this language and study its safety, soundness, and
performance properties in~\secref{language}.

\subsection{Incremental exploration with resources}
Our approach enables the programmer to determine the precise
dependences between the input and the result of a function.  The main
idea is to deem the parameters of a function as {\em resources} and
provide primitives to explore incrementally any value, including the
underlying value of a resource.  This incremental exploration process
reveals the dependences between the parameter of the function and its
result.  

The incremental exploration process is guided by types.  If a
value has the modal type $\bang{\tau}$, then the underlying value of
type $\tau$ can be bound to an ordinary, unrestricted variable by the
\nletbang construct; this will create a dependence between the
underlying value and the result.  If a value has a product type, then
its two parts can be bound to two resources using the \nletx
construct; this creates no dependences.  If the value is a sum type,
then it can be case analyzed using the \nmcase construct, which
branches according to the outermost form of the value and assigns the
inner value to a resource; {\tt mcase} creates a dependence on the
outer form of the value of the resource.  The key aspect of the \nletx
and \nmcase is that they bind resources rather than ordinary
variables.

\begin{figure}
\small
\begin{center}
\setlength{\tabcolsep}{\codecolsep}
\small
\begin{tabular}{|l|l|}
\hline
Non-memoized 
& 
Memoized\\
\hline
\parbox[t]{1in}{
\begin{codeListing}
XX\=XX\=XX\=XX\=XX\=XXXXXXXXXXXXXXXXXXXXXXXXXXXXXXXXXXXXXXXXXXXXXXX\=\kill
fib:int -> int\\
\kfun fib (n)=\\
\>\\
\>\kif (n < 2) \kthen n\\
\>\kelse fib(n-1) + fib(n-2)
\end{codeListing}}
& 
\parbox[t]{1in}{
\begin{codeListing}
XX\=XX\=XX\=XX\=XX\=XXXXXXXXXXXXXXXXXXXXXXXXXXXXXXXXXXXXXXXXXXXXXXX\=\kill
mfib:!int -> int\\
\kmfun mfib (n')= \\
\>\klet~ !n = n' \kin \kreturn (\\
\>\>\kif (n < 2) \kthen n \\
\>\>\kelse mfib(!(n-1)) + mfib(!(n-2)))
\kend
\end{codeListing}}\\
\hline
\parbox[t]{2in}{
\begin{codeListing}
XX\=XX\=XX\=XX\=XX\=XXXXXXXXXXXXXXXXXXXXXXXXXXXXXXXXXXXXXXXXXXXXXXX\=\kill
f: int * int * int -> int\\
\kfun f (x, y, z)=\\
\>\kif (x > 0) \kthen \\
\>\>fy y\\
\>\kelse\\
\>\>fz z
\end{codeListing}}
&
\parbox[t]{2in}{
\begin{codeListing}
XX\=XX\=XX\=XX\=XX\=XXXXXXXXXXXXXXXXXXXXXXXXXXXXXXXXXXXXXXXXXXXXXXX\=\kill
mf:int * !int * !int -> int\\
\kmfun mf (x', y', z')=\\
\>\kmif (x' > 0) \kthen \\
\>\>\klet~!y = y' \kin \kreturn (fy y) \kend\\
\>\kelse\\
\>\>\klet~!z = z' \kin \kreturn (fz z) \kend
\end{codeListing}}\\
\hline
\end{tabular}
\end{center}
\vspace*{-2mm}
\caption{Fibonacci and expressing partial dependences.}
\label{fig:precise-dependences}
\end{figure}

Exploring the input to a function via \nletbang, \nmcase, and \nletx
builds a {\em branch} recording the dependences between the input and
the result of the function. The \nletbang adds to the branch the full
value, the \nmcase adds the kind of the sum, and \nletx adds nothing.
Consequently, a branch contains both data dependences (from
\nletbang's) and control dependences (from \nmcase's).  When a
\nreturn is encountered, the branch recording the revealed dependences
is used to key the memo table.  If the result is found in the memo
table, then the stored value is returned, otherwise the body of the
\nreturn is evaluated and the memo table is updated to map the branch
to the result.  The type system ensures that all dependences are made
explicit by precluding the use of resources within \nreturn's body.

As an example consider the Fibonacci function {\tt fib} and its
memoized counterpart {\tt mfib} shown in~\figref{precise-dependences}.
The memoized version, {\tt mfib}, exposes the underlying value of its
parameter, a resource, before performing the two recursive calls as
usual.  Since the result depends on the full value of the parameter,
it has a bang type.  The memoized Fibonacci function runs in linear
time as opposed to exponential time when not memoized.

Partial dependences between the input and the result of a function can be
captured by using the incremental exploration technique.  As an example
consider the function {\tt f} shown in
\figref{precise-dependences}.  The function checks whether {\tt x} is
positive or not and returns {\tt fy(y)} or {\tt fz(z)}.  Thus the result
of the function depends on an approximation of {\tt x} (its sign) and on
either {\tt y} or {\tt z}.  The memoized version {\tt mf} captures this by
first checking if {\tt x'} is positive or not and then exposing the
underlying value of {\tt y'} or {\tt z'} accordingly.  Consequently, the
result will depend on the sign of {\tt x'} and on either {\tt y'} or {\tt
z'}.  Thus if {\tt mf} is called with parameters $(1,5,7)$ first and then
$(2,5,3)$, the result will be found in the memo the second time, because
when {\tt x'} is positive the result depends only on {\tt y'}.  Note that
{\tt mif} construct used in this example is just a special case of the
more general {\tt mcase} construct.

\subsection{Memo lookups and indexable types}

A critical issue for efficient memoization is the implementation of
memo tables along with lookup and update operations on them.  We
support expected constant time memo-table lookup and update operations
by representing memo tables using hashing. This requires that the
underlying type $\tau$ of a modal type {\tt !$\tau$} be an {\em
  indexable type}.  An indexable type is associated with an injective
function, called an {\em index function}, that maps each value of that
type to a unique integer called the {\em index}.  The uniqueness
property of the indices for a given type ensures that two values are
equal if and only if their indices are equal. We define equality only
for indexable types.  This enables implementing memo tables as hash
tables keyed by branches consisting of indices.

We assume that each primitive type comes with an index function.  For
example, for integers, the identity function can be chosen as the
index function.  Composite types such as lists or functions must be
{\em boxed} to obtain an indexable type.  A boxed value of type $\tau$
has type $\bxed{\tau}$.  When a box is created, it is assigned a
unique tag, and this tag is used as the unique index of that boxed
value.  For example, we can define boxed lists as
follows.\\[1mm]
\parbox{2in}{
\begin{codeListing}
XX\=XX\=XX\=XX\=XX\=XXXXXXXXXXXXXXXXXXXXXXXXXXXXXXXXXXXXXXXXXXXXXXX\=\kill
datatype $\alpha$ blist' =~~NIL~|~CONS of $\alpha$ * (($\alpha$ blist') box)\\ 
type $\alpha$ blist = ($\alpha$ blist') box
\end{codeListing}
}

\smallskip

Based on boxes we implement hash-consing as a form of memoization.
For example, hash-consing for boxed lists can be implemented as follows.\\[1mm]
\parbox{2in}{
\begin{codeListing}
XX\=XX\=XX\=XX\=XX\=XXXXXXXXXXXXXXXXXXXXXXXXXXXXXXXXXXXXXXXXXXXXXXX\=\kill
hCons: !$\alpha$ * !($\alpha$ blist) -> $\alpha$ blist \\
\kmfun hCons (h', t') = \\[0.5ex]
\>\klet~!h = h' \kand~!t = t' \kin\\
\>\>return (box (CONS(h,t))) \\
\>\kend
\end{codeListing}}\\
The function takes an item and a boxed list and returns the boxed list
formed by consing them.  Since the function is memoized, if it is ever
called with two values that are already hash-consed, then the same
result will be returned.  The advantage of being able to define
hash-consing as a memoized function is that it can be applied
selectively.

\begin{figure}[t]
\begin{center}
\setlength{\tabcolsep}{5pt}
\small
\begin{tabular}{|l|l|}
\hline
Non-memoized 
& 
Memoized\\
\hline
\parbox[t]{2in}{
\begin{codeListing}
XX\=XX\=XX\=XX\=XX\=XX\=XpXXXXXXXXXXXXXXXXXXXXXXXXXXXXXXXXXXXXXXXXXXX\=\kill
ks: int * ((int*real) list) -> int\\
\kfun ks (c,l) = \\
\>\\
\>\kcase l \kof \\
\>\>nil => 0\\
\>~|(w,v)::t =>\\
\>\>\kif (c < w) \kthen\\
\>\>\>ks(c,t)\\
\>\>\kelse \\
\>\>\>\klet
          v1 = ks(c,t)\\
\>\>\>\>\>v2 = v + ks(c-w,t)\\
\>\>\>\kin\\
\>\>\>\>\kif (v1>v2) then v1\\
\>\>\>\>\kelse v2\\
\>\>\>\kend
\end{codeListing}}
&
\parbox[t]{2in}{
\begin{codeListing}
XX\=XX\=XX\=XX\=XX\=XX\=XX\=XXXXXXXXXXXXXXXXXXXXXXXXXXXXXXXXXXXXXXXXXXX\=\kill
mks: !int * !((int*real) list) -> int\\
\kmfun mks (c',l')\\
\>\klet~!c = c' \kand~!l = l' \kin \kreturn (\\
\>\>\kcase (unbox l) \kof \\
\>\>\>\kNIL => 0\\
\>\>|~\kCONS((w,v),t) => \\
\>\>\>\kif (c < w) \kthen \\
\>\>\>\>mks(!c,!t)\\
\>\>\>\kelse\\
\>\>\>\>\klet
            v1 = mks(!c,!t)\\
\>\>\>\>\>\>v2 = v + mks(!(c-w),!t)\\
\>\>\>\>\kin\\
\>\>\>\>\>\kif (v1 > v2) then v1\\
\>\>\>\>\>\kelse v2\\
\>\>\>\>\kend) \kend
\end{codeListing}}\\
\hline
\end{tabular}
\end{center}
\vspace*{-2mm}
\caption{Memo tables for memoized Knapsack can be discarded at
completion.}
\label{fig:ks}
\end{figure}

\subsection{Controlling space usage via scoping}
To control space usage of memo tables, we enable the programmer to
dispose of memo tables by conventional scoping by assoaciating each
memoized function with its own memo table.  When a memoized function
goes out of scope, its memo table can be garbage collected.  For
example, in many dynamic-programming algorithms result re-use occurs
between recursive calls of the same function.  In this case, the
programmer can scope the memoized function inside an auxiliary
function so that its memo table is discarded as soon as the auxiliary
function returns.  As an example, consider the standard algorithm for
the Knapsack Problem {\tt ks} and its memoized version {\tt
  mks}~\figref{ks}.  Since result sharing mostly occurs among the
recursive calls of {\tt mks}, it can be scoped in some other function
that calls {\tt mks}; once {\tt mks} returns its memo table will go
out of scope and can be discarded.  

We note that this technique gives only partial control over space
usage.  In particular it does not give control over when individual
memo table entries are purged.  In \secref{discussion}, we discuss how
the framework might be extended so that each memo table is managed
according to a programmer specified caching scheme.  The main idea is
to require the programmer to supply a caching scheme as a parameter to
the {\tt mfun} and maintain the memo table according to the chosen
caching scheme.

\begin{figure}[t]
\begin{center}
\small
\begin{tabular}{|l|l|}
\hline
Non-memoized 
& 
Memoized\\
\hline
\parbox[t]{2in}{
\begin{codeListing}
XX\=XX\=XX\=XX\=XX\=XX\=XXXXXXXXXXXXXXXXXXXXXXXXXXXXXXXXXXXXXXXXXXXXX\=\kill
\>\\
\>\\
fil: int->bool * int list -> int list\\
\kfun fil (g:int->bool, l:int list) =\\
\>\kcase l \kof \\
\>\>\knil => \knil\\
\>|~h::t =>\\
\>\>\klet tt = fil(g,t) \kin\\
\>\>\>\kif (g h) \kthen h::tt\\
\>\>\>\kelse tt\\
\>\>\kend
\end{codeListing}}
&
\parbox[t]{2in}{
\begin{codeListing}
XX\=XX\=XX\=XX\=XX\=XX\=XXXXXXXXXXXXXXXXXXXXXXXXXXXXXXXXXXXXXXXXXXXXX\=\kill
empty = box NIL\\
\>\\
mfil: int->bool * int blist -> int blist\\
\kfun mfil (g,l) = \\
\>\kcase (unbox l) \kof \\
\>\>\kNIL => empty\\
\>|~\kCONS(h,t) =>  \\
\>\>\klet tt = mfil(g,t) \kin\\
\>\>\>\kif (g h) \kthen hCons(h,tt)\\
\>\>\>\kelse tt\\
\>\>\kend
\end{codeListing}}\\[-4mm]
\parbox[t]{2in}{
\begin{codeListing}
XX\=XX\=XX\=XX\=XX\=XXXXXXXXXXXXXXXXXXXXXXXXXXXXXXXXXXXXXXXXXXXXXXX\=\kill
qs: int list -> int list\\
\kfun qs (l) = \\
\>\\
\>\kcase l \kof \\
\>\>\knil => nil\\
\>|~\kcons(h,t) =>  \\
\>\>\klet
        s = fil(fn x=>x<h,t)\\
\>\>\>\>g = fil(fn x=>x>=h,t)\\
\>\>\kin\\
\>\>\>(qs s)@(h::(qs g))\\
\>\>\kend
\end{codeListing}}
&
\parbox[t]{2in}{
\begin{codeListing}
XX\=XX\=XX\=XX\=XX\=XX\=XXXXXXXXXXXXXXXXXXXXXXXXXXXXXXXXXXXXXXXXXXXXX\=\kill
mqs: !(int blist) -> int blist\\
\kmfun mqs (l':!int blist) = \\
\>\klet~!l = l' \kin \kreturn ( \\
\>\>\kcase (unbox l) \kof \\
\>\>\>\kNIL => NIL\\
\>\>|~\kCONS(h,t) =>  \\
\>\>\>\klet
          s = mfil(fn x=>x<h,t)\\
\>\>\>\>\>g = mfil(fn x=>x>=h,t)\\
\>\>\>\kin\\
\>\>\>\>(mqs !s)@(h::(mqs !g))\\
\>\>\>\kend) \kend
\end{codeListing}}\\
\hline
\end{tabular}
\end{center}
\vspace*{-2mm}
\caption{The Quicksort algorithm.}
\label{fig:qsort}
\end{figure}

\subsection{Memoized Quicksort}

As a more sophisticated example, we consider Quicksort. \figref{qsort}
shows an implementation of the Quicksort algorithm and its memoized
counterpart. The algorithm first divides its input into two lists
containing the keys less than the pivot, and greater than the pivot by
using the filter function {\tt fil}.  It then sorts the two sublists,
and returns the concatenation of the results.  The memoized filter
function {\tt mfil} uses hash-consing to ensure that there is only one
copy of each result list.  The memoized Quicksort algorithm {\tt mqs}
exposes the underlying value of its parameter and is otherwise similar
to {\tt qs}.  Note that {\tt mqs} does not build its result via
hash-consing---it can output two copies of the same result.  Since in
this example the output of {\tt mqs} is not consumed by any other
function, there is no need to do so.  Even if the result were consumed
by some other function, one can choose not to use hash-consing
because operations such as insertions to and deletions from the input
list will surely change the result of Quicksort.

When the memoized Quicksort algorithm is called on ``similar'' inputs,
one would expect that some of the results would be re-used.  Indeed,
we show that the memoized Quicksort algorithm computes its result in
expected linear time when its input is obtained from a previous input
by inserting a new key at the beginning.  Here the expectation is over
all permutations of the input list and also the internal randomization
of the hash functions used to implement the memo tables.  For the
analysis, we assume, without loss of generality, that all keys in the
list are unique.

\begin{theorem}
Let $L$ be a list and let $L'=[a,L]$.  Consider running memoized Quicksort
on $L$ and then on $L'$.  The running time of Quicksort on the modified
list $L'$ is expected $O(n)$ where $n$ is the length of $L'$.
\end{theorem}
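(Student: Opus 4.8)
The plan is to trace the second run of \texttt{mqs} call by call and identify which recursive invocations find their result already in the memo table. Running \texttt{mqs} on a boxed list whose contents form the sequence $S$ unfolds the usual Quicksort recursion tree on $S$: the root is labelled $S$, the pivot is the first element $p$ of $S$, and the two children are labelled $S|_{<p}$ and $S|_{>p}$ (the subsequences of keys below and above $p$, in their original order; recall keys are distinct). Call these labels the \emph{node sequences} of the tree. The key structural observation is that, because \texttt{mfil} builds its output through the hash-consed constructor \texttt{hCons}, the box it returns for a filtered sublist is a canonical function of that sublist's contents alone; since \texttt{mqs}'s memo table is keyed by the index (tag) of its argument box, and both this table and \texttt{hCons}'s table persist from the first run to the second, a recursive call of \texttt{mqs} on a sequence $T$ in the second run is a memo hit exactly when $T$ occurs as a node sequence of the recursion tree for $L$. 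A short induction shows that the first run memoizes \texttt{mqs} on precisely the set of node sequences of the recursion tree for $L$; memo hits occurring within the first run may prune part of the unfolding, but only a subtree all of whose node sequences are already produced elsewhere in that run.

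Next I would establish the combinatorial lemma that, for a sequence $S$ with distinct keys and an open interval $(\ell,r)$ with endpoints in $S\cup\{-\infty,+\infty\}$, the subsequence $S|_{(\ell,r)}$ is a node sequence of the recursion tree for $S$ if and only if $\ell$ and $r$ each precede, in $S$-order, every key of $S$ strictly between $\ell$ and $r$ (with $\pm\infty$ preceding everything). Using this I would classify the nodes of the recursion tree for $L'=[a,L]$. Since $a$ is the first element of $L'$ it is the root pivot and occurs in no other node sequence, so every non-root node is labelled $L'|_{(\ell,r)}=L|_{(\ell,r)}$ for some interval with $a\notin(\ell,r)$. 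If in addition $\ell\neq a$ and $r\neq a$, then $\ell,r\in L\cup\{-\infty,+\infty\}$, and since prepending $a$ does not change the relative order of any two elements of $L\cup\{-\infty,+\infty\}$, the precedence condition of the lemma holds for $(\ell,r)$ with respect to $L$ iff it holds with respect to $L'$; hence $L|_{(\ell,r)}$ is also a node sequence for $L$ and the call is a memo hit. Consequently the only recursive calls that can do more than $O(1)$ work are the root and the nodes whose interval has $a$ as an endpoint, which are exactly the left spine of the right subtree of the root and the right spine of its left subtree.

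It remains to bound the expected cost of this small family of ``working'' nodes together with the $O(1)$ memo-lookup paid at each node actually visited (each working node has at most two children, and any child that is a memo hit is visited but not descended into). At a working node $v$ with label $T_v$, the call performs two \texttt{mfil} traversals and one list append, all of size $O(|T_v|)$, plus $O(1)$ expected-time hash-table operations. The left subtree of the root is the Quicksort recursion tree of $L|_{<a}$, which is a uniformly random BST on those keys because $L$ is a uniformly random permutation; the expected length of its right spine is $O(\log n)$, and the expected label size decreases geometrically along the spine (the pivot of a node has uniformly random rank within its label), so the expected total label size along the spine is $O(n)$. The same bounds hold for the left spine of the right subtree, and the root contributes $O(n)$ plus one append of cost $O(n)$. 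Summing by linearity of expectation over the $O(\log n)$ visited nodes and the working nodes of total expected size $O(n)$, and using that each hash operation costs $O(1)$ in expectation over the hash functions, the second run runs in expected time $O(n)$.

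The main obstacle is the analysis of exactly which recursive calls are served from the memo table. This needs the precise characterization of the recursion-tree node sequences, the observation that hash-consing makes those sequences canonical memo keys — so that a sequence first produced via one interval of $L$'s recursion tree is still reused when it reappears via a different interval of $L'$'s tree — and the care to see that memoization-induced pruning during the first run loses no node sequence. Once these are in place, the probabilistic estimate is a standard random-BST spine calculation.
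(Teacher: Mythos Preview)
Your proposal is correct and follows essentially the same line as the paper's proof: identify that the only ``working'' recursive calls in the second run are the root together with the right spine of the left subtree and the left spine of the right subtree, then bound the expected total input size along those spines by $O(n)$ via a random-BST argument.

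There are two presentational differences worth noting. First, you characterize recursion-tree nodes by their defining interval $(\ell,r)$ and the precedence condition on $\ell$ and $r$, whereas the paper characterizes membership in the subtree rooted at a pivot $p$ via two order conditions on the keys preceding $p$; these are equivalent formulations of the same fact, and your interval version makes the ``$a$ is an endpoint'' criterion for the spines particularly transparent. Second, for the spine cost the paper introduces random variables $X_i$ counting how many steps are needed for the input size to fall below $(3/4)^i n$ and uses $E[X_i]\le 2$; your ``expected size halves at each spine step, so the sum telescopes to $O(n)$'' is the same estimate in different clothing, though the paper's phase decomposition is a bit more careful about summing a random number of random terms. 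You are also more explicit than the paper on two points it leaves implicit: that hash-consing in \texttt{mfil} makes the argument box of each recursive \texttt{mqs} call a canonical function of the filtered sequence's contents (so equal sequences really do produce memo hits), and that memoization-induced pruning during the first run cannot lose any node sequence. Both are genuine obligations and your handling of them is sound.
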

\begin{proof} 
Consider the recursion tree of Quicksort with input $L$, denoted $Q(L)$,
and label each node with the pivot of the corresponding recursive call
(see \figref{pivots} for an example).  Consider any pivot (key) $p$ from
$L$ and let $L_p$ denote the keys that precede $p$ in $L$.  It is easy to
see that a key $k$ is in the subtree rooted at $p$ if and only if the
following two properties are satisfied for any key $k' \in L_p$.
\begin{enumerate}
\item If $k' < p$ then $k > k'$, and
\item if $k' > p$ then $k < k'$.
\end{enumerate}
Of the keys that are in the subtree of $p$, those that are less than
$p$ are in its left subtree and those greater than $p$ are in its
right subtree.

\begin{figure}
\centering{\parbox{5in}{
\includegraphics[width=2in,height=2in]{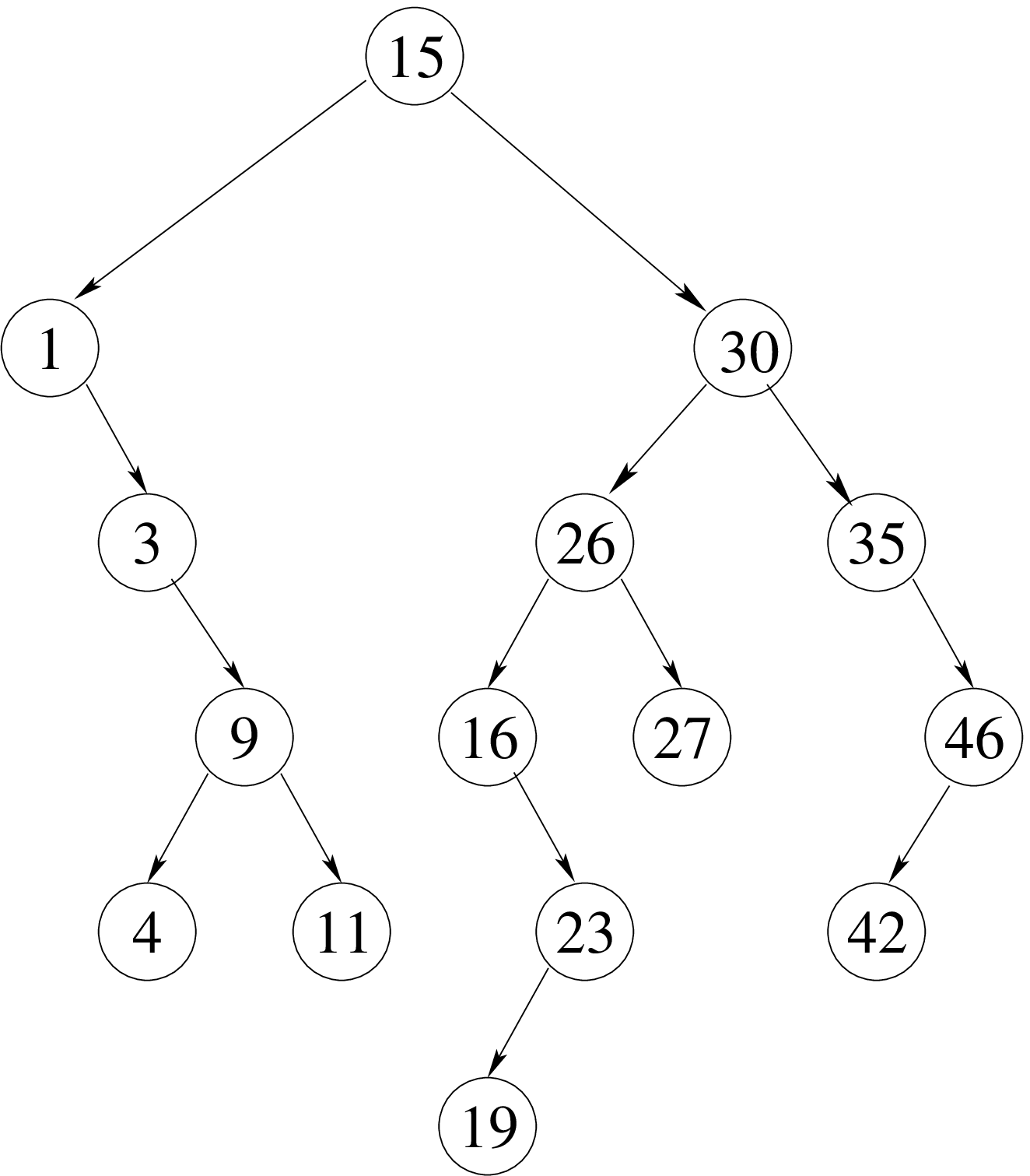} \hfill
\includegraphics[width=2in,height=2in]{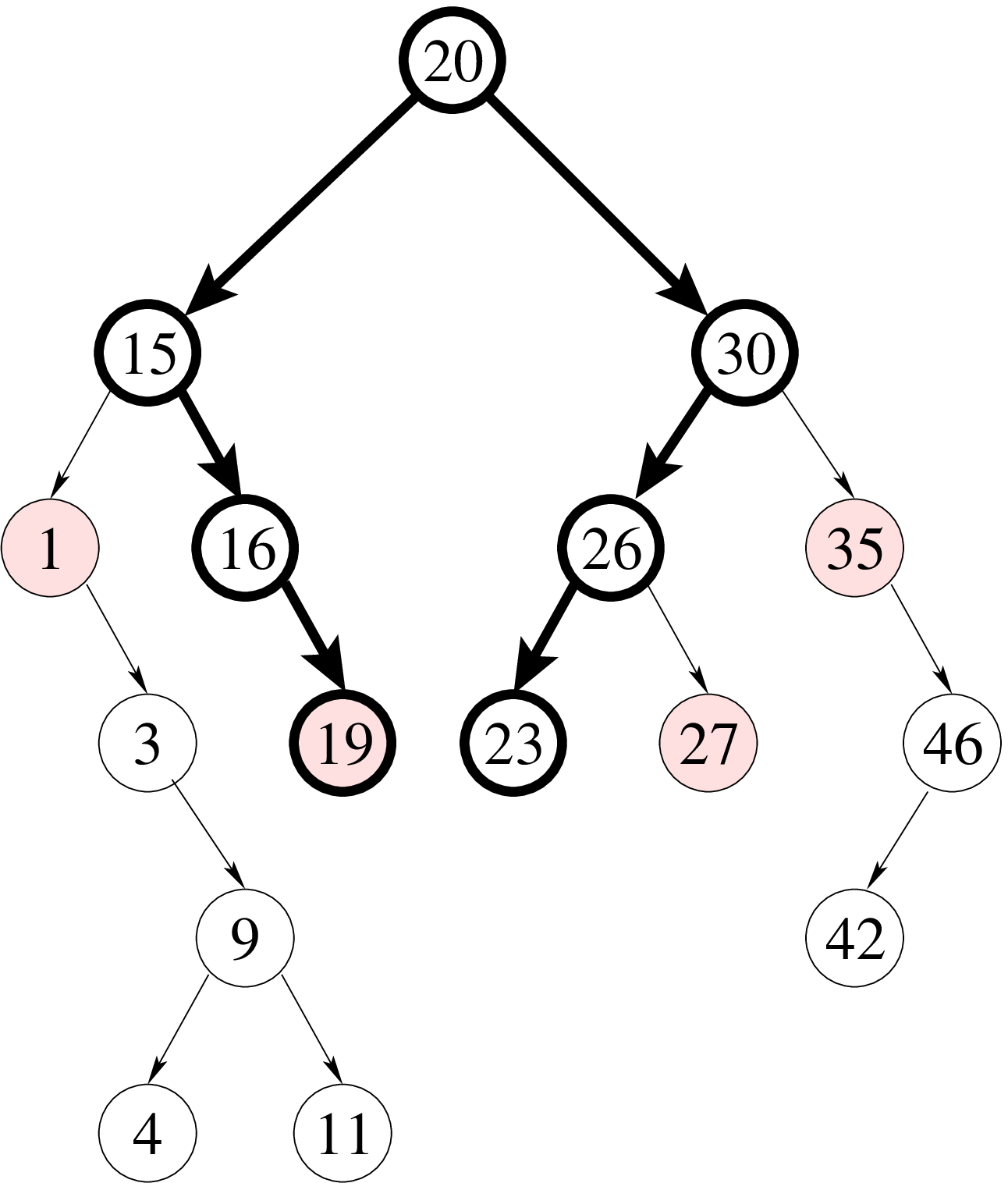} \hfill
}}
\vspace*{-2mm}
\caption{The recursion tree for Quicksort with inputs  $L =
[15,30,26,1,3,16,27,9,35,4,46,23,11,42,19]$ (left) and \mbox{$L' =
[20,L]$ (right).}}
\label{fig:pivots}
\end{figure}

Now consider the recursion tree $Q(L')$ for $L'= [a,L]$ and let $p$ be
any pivot in $Q(L')$.  Suppose $p < a$ and let $k$ be any key in the
left subtree of $p$ in $Q(L)$.  Since $k < p$, by the two properties
$k$ is in the left subtree of $p$ in $Q(L')$.  Similarly if $p> a$
then any $k$ in the right subtree of $p$ in $Q(L)$ is also in the
right subtree of $p$ in $Q(L')$.  Since filtering preserves the
respective order of keys in the input list, for any $p$, $p<a$, the
input to the recursive call corresponding to its left child will be
the same.  Similarly, for $p>a$, the input to the recursive call
corresponding to its right child will be the same.  Thus, when sorting
$L'$ these recursive calls will find their results in the memo.
Therefore only recursive calls corresponding to the root, to the
children of the nodes in the rightmost spine of the left subtree of
the root, and the children of the nodes in the leftmost spine of the
right subtree of the root may be executed (the two spines are shown
with thick lines in \figref{pivots}).  Furthermore, the results for
the calls adjacent to the spines will be found in the memo.

Consider the calls whose results are not found in the memo.  In the
worst case, these will be all the calls along the two spines.
Consider the sizes of inputs for the nodes on a spine and define the
random variables $X_1 \ldots X_k$ such that $X_i$ is the least number
of recursive calls (nodes) performed for the input size to become
$\left(\frac{3}{4}\right)^in$ or less after it first becomes
$\left(\frac{3}{4}\right)^{(i-1)}n$ or less. Since $k \le
\lceil\log_{4/3}{n}\rceil$, the total and the expected number of operations along a
spine are
\begin{eqnarray*}
C(n) & \le & \sum_{i=1}^{\lceil\log_{4/3}{n}\rceil}{X_i \left(\frac{3}{4} \right) ^{i-1}n},\mbox{~and}\\
E[C(n)] &\le & \sum_{i=1}^{\lceil\log_{4/3}{n}\rceil}{E[X_i]\left(\frac{3}{4}\right)^{i-1}n}.
\end{eqnarray*}\\[1ex]

Since the probability that the pivot lies in the middle half of
the list is $\frac{1}{2}$, $E[X_i] \le 2$ for $i\ge1$, and we have 
\begin{eqnarray*}
E[C(n)] & \le & \sum_{i=1}^{\lceil\log_{4/3}{n}\rceil}{2\left(\frac{3}{4}\right)^{i-1} n}.
\end{eqnarray*}
Thus, $E[C(n)] = O(n)$ This bound holds for both spines; therefore the
number of operations due to calls whose results are not found in the
memo is $O(n)$.  Since each operation, including hash-consing, takes
expected constant time, the total time of the calls whose results are
not in the memo is $O(n)$. Now, consider the calls whose results are
found in the memo, each such call will be on a spine or adjacent to
it, thus there are an expected $O(\log{n})$ such calls.  Since, the
memo table lookup overhead is expected constant time the total cost
for these is $O(\log{n})$. We conclude that Quicksort will take
expected $O(n)$ time for sorting the modified list~$L'$.
\end{proof}

This theorem can be extended to show that the $O(n)$ bound holds for
an insertion anywhere in the list.  Although this bound is better than
a complete rerun, which would take expected $O(n\log{n})$, it is still
far from optimal for Quicksort (expected $O(\log{n})$).  It is not
known if the optimal bound can be achieved by using memoization only.
The optimal, however, can be achieved by using a combination of
dynamic dependence graphs and memoization~\citep{AcarBlBlTa06,Acar05}.

\section{The \MFL{} Language}
\label{sec:mfl}
\label{sec:language}

In this section we study a small functional language, called \MFL{},
that supports selective memoization.  \MFL{} distinguishes memoized
from non-memoized code, and is equipped with a modality for tracking
dependences on data structures within memoized code.  This modality is
central to our approach to selective memoization, and is the focus of
our attention here.  The main result is a soundness theorem stating
that memoization does not affect the outcome of a computation compared
to a standard, non-memoizing semantics (\secref{mfl::soundness}).  We
also show that the memoization mechanism of \MFL{} causes a constant
factor slowdown compared to a standard, non-memoizing semantics
(\secref{mfl::complexity}).

\subsection{Abstract syntax}
\label{sec:mfl-abs-syn}
The abstract syntax of \MFL{} is given in Figure~\ref{fig:ast}.  The
meta-variables $x$ and $y$ range over a countable set of
\emph{variables}.  The meta-variables $a$ and $b$ range over a
countable set of \emph{resources}.  (The distinction will be made
clear below.)  The meta-variable $l$ ranges over a countable set of
\emph{locations}.  We assume that variables, resources, and locations
are mutually disjoint.  The binding and scope conventions for
variables and resources are as would be expected from the syntactic
forms. As usual we identify pieces of syntax that differ only in their
choice of bound variable or resource names.  A term or expression is
\emph{resource-free} if and only if it contains no free resources, and
is \emph{variable-free} if and only if it contains no free variables.
A \emph{closed} term or expression is both resource-free and
variable-free; otherwise it is \emph{open}.

\begin{figure}
\small
\[
\renewcommand{\arraycolsep}{2mm}
\renewcommand{\arraystretch}{1}
\begin{array}{|@{~~~~}lrcl@{~~~~}|}
\hline
&~&~& \\
\textit{Indexable Types} & \eta & \bnfdef &  \tunit \bnfalt \Int \bnfalt \ldots\\[2mm]

\textit{Types}  & \tau & \bnfdef & \eta \bnfalt \bang{\eta} \bnfalt 
\taui \cross \tauii \bnfalt \taui + \tauii \bnfalt \rt{\u}{\tau} \bnfalt
\taui \ra \tauii \\[2ex]

\textit{Operators} & o   & \bnfdef       &  \Plus \bnfalt \Minus \bnfalt \ldots \\[2mm]

\textit{Expressions} & \e     & \bnfdef     &
         \ret{\t} \bnfalt \letbin{x \cdcolon \eta}{\t}{\e}\bnfalt \\
 & & & \letpin{a_1\cdcolon \taui}{a_2\cdcolon \tauii}{\t}{\e} \bnfalt \\
 & & &  \caseofflat{\t}{\ai}{\taui}{\ei}{\aii}{\tauii}{\eii}\\[2mm]

\textit{Terms} & \t     & \bnfdef     & v \bnfalt 
        \op{\ti,\ldots,\t_n} \bnfalt \pair{\ti}{\tii} \bnfalt \fun{f}{a}{\taui}{\tauii}{\e} \bnfalt \\
& & &   \apply{\ti}{\tii} \bnfalt \bang{\t} \bnfalt
        \inl{\t}{\taui}{\tauii} \bnfalt \inr{\t}{\taui}{\tauii} \bnfalt   \roll{\t} \bnfalt \unroll{\t} \\[2mm]

\textit{Values} & v        & \bnfdef    & x \bnfalt a \bnfalt  \uno \bnfalt n \bnfalt
\bang{v} \bnfalt \pair{\vi}{\vii} \bnfalt  \funval{l}{f}{a}{\taui}{\tauii}{\e}\\

&~&~& \\
\hline
\end{array}
\]

\vspace{-2mm}
\caption{The abstract syntax of \MFL{}.}
\label{fig:ast}
\end{figure}

The types of \mfl include \textcd{1} (unit), \textcd{int}, products
and sums, recursive data types $\rt{\u}{\tau}$, memoized function
types, and bang types $\bang{\eta}$.  \mfl distinguishes {\em
  indexable types}, denoted $\eta$, as those that accept an injective
function, called an {\em index function}, whose co-domain is
integers. The underlying type of a bang type $\bang{\eta}$ is
restricted to be an indexable type.  For type \textcd{int}, identity
serves as an index function; for \textcd{1} (unit) any constant
function can be chosen as the index function.  For non-primitive types
an index can be supplied by boxing values of these types.  Boxed
values would be allocated in a store and the unique location of a box
would serve as an index for the underlying value.  With this extension
the indexable types would be defined as \mbox{$\eta \bnfdef \tunit
  \bnfalt \Int \bnfalt \tau~\mathcd{box}$}.  Since supporting boxed
types is well understood, we do not formalize boxing here.

The abstract syntax is structured into \emph{terms} and
\emph{expressions}, in the terminology of Pfenning and
Davies~\citep{pfenning+:judgmental}.  Roughly speaking, terms evaluate
independently of their context, as in ordinary functional programming,
whereas expressions evaluate in the context of a memo table.  Thus,
the body of a memoized function is an expression, whereas the function
itself is a term.  Note, however, that the application of a function
is a \emph{term}, not an \emph{expression}; this corresponds to the
encapsulation of memoization with the function, so that updating the
memo table is benign.  In a more complete language we would include
case analysis and projection forms among the terms, but for the sake
of simplicity we include these only as expressions.  We would also
include a plain function for which the body is a term.  Note that
every term is trivially an expression; the \textcd{return} expression
is the inclusion.

\subsection{Static semantics}
\label{sec:mfl-static}

The type structure of \MFL{} extends the framework of Pfenning and
Davies ~\citep{pfenning+:judgmental} with a ``necessitation'' modality,
$\bang{\eta}$, which is used to track data dependences for selective
memoization.  This modality does not correspond to a monadic
interpretation of memoization effects ($\circle\,\tau$ in the notation
of Pfenning and Davies), though one could imagine adding such a
modality to the language.  The introductory and eliminatory forms for
necessity are standard, namely $\bang{\t}$ for introduction, and
$\letbin{x\cdcolon\eta}{\t}{\e}$ for elimination.

Our modality demands that we distinguish variables from resources.
Variables in \MFL{} correspond to the ``validity'', or
``unrestricted'', context in modal logic, whereas resources in \MFL{}
correspond to the ``truth'', or ``restricted'' context.  An analogy
may also be made to the judgmental presentation of linear
logic~\citep{Pfenning95,PolakowPf99}: variables correspond to the
intuitionistic context, resources to the linear
context.\footnote{Note, however, that we impose no linearity
constraints in our type system!}

The inclusion, $\ret{\t}$, of terms into expressions has no analogue
in pure modal logic, but is specific to our interpretation of
memoization as a computational effect.  The typing rule for $\ret{\t}$
requires that $\t$ be resource-free to ensure that any dependence on
the argument to a memoized function is made explicit in the code
before computing the return value of the function.  In the first
instance, resources arise as parameters to memoized functions, with
further resources introduced by their incremental decomposition using
\textcd{let$\times$} and \textcd{mcase}.  These additional resources
track the usage of as-yet-unexplored parts of a data structure.
Ultimately, the complete value of a resource may be accessed using the
\textcd{let!} construct, which binds its value to a variable that may
be used without restriction.  In practice this means that those parts
of an argument to a memoized function on whose value the function
depends will be given modal type.  However, it is not essential that
all resources have modal type, nor that the computation depend upon
every resource that does have modal type.

\begin{figure}[h]
\small
\[
\begin{array}{|@{~~~~~~}c@{~~~~~~}|}
\hline
~ \\

\infer[\rlabel{variable}]  {\G ; \D  \ts x \cc \tau} {(\G(x)=\tau)}
\quad\quad
\infer[\rlabel{resource}]  {\G ; \D  \ts a \cc \tau} {(\D(a)=\tau)}
\\

\infer[\rlabel{number}]  {\G ; \D  \tsi n : \Int}{\strut} 
\quad\quad
\infer[\rlabel{unit}]  {\G ; \D  \tsi \uno : \tunit}{\strut} 
\\[3mm]

\infer [\rlabel{primitive}]
{\G ; \D  \ts \op{\ti,\ldots,\t_n} : \tau}
{\G ; \D  \ts \t_i : \tau_i ~~(1\leq i\leq n) ~&~ \tso \oper
:(\taui, \ldots, \tau_n)~\tau}
\\[3mm]

\infer[\rlabel{pair}]
{\G ; \D \ts \pair{\ti}{\tii} :\taui \cross \tauii}
{\G ; \D  \ts \ti : \taui & 
\G ; \D  \ts \tii : \tauii}
\\[3mm]

\infer[\rlabel{fun}]
{\G ; \D  \tsi \fun{f}{a}{\taui}{\tauii}{\e}  : \taui \ra \tauii}
{\G ,\, f \cc \taui \ra \tauii;\D,a\cc \taui \ts \e :\tauii}
\\[3mm]

\infer[\rlabel{fun value}]
{\G; \D \tsi \funval{l}{f}{a}{\taui}{\tauii}{\e} : \taui\ra\tauii}
{\G,f\cc\taui\ra\tauii; \D,a\cc\taui \ts \e:\tauii}
\\[3mm]

\infer [\rlabel{apply}]
{\G ; \D  \ts \apply{\ti}{\tii} : \tauii}
{\G ; \D \ts \ti: \taui \ra \tauii & \G ;  \D  \ts \tii : \taui}
\\[3mm]

\infer[\rlabel{bang}]
{\G ; \D  \ts \bang{\t} : \bang{\eta}}
{\G ; \emptyContext  \ts \t : \eta}
\\[3mm]

\infer[\rlabel{sum/inl}]
{\G ; \D  \ts \inl{\t}{\taui}{\tauii} : \taui \osum \tauii}
{\G ; \D  \ts \t : \taui} \qquad
\infer[\rlabel{sum/inr}]
{\G ; \D  \ts \inr{\t}{\taui}{\tauii} : \taui \osum \tauii}
{\G ; \D  \ts \t : \tauii}
\\[3mm]

\infer[\rlabel{roll}]
{\G ; \D  \ts \roll{\t} : \rt{\u}{\tau} }
{\G ; \D  \ts \t : [\rt{\u}{\tau}/\u]\tau}
\qquad
\infer[\rlabel{unroll}]
{\D  \ts \unroll{\t} :  [\rt{\u}{\tau}/\u]\tau}
{\G ; \D  \ts \t :\rt{\u}{\tau}}\\[3mm]

\hline
~\\
\infer[\rlabel{return}]
{\G ; \D  \ts \ret{\t} : \tau}
{\G ; \emptyContext  \ts \t : \tau}
\\[3mm]


\infer[\rlabel{let!}]
{ \G ; \D  \ts \letbin{x \cdcolon \eta}{\t}{\e} :\tau}
{ \G ; \D  \ts \t : \bang{\eta} &  \G, x \cc \eta; \D \ts \e : \tau}
\\[3mm]

\infer[\rlabel{let$\cross$}]
        {\G ; \D \ts \letpin{a_1\cc\taui}{a_2\cc\tauii}{\t}{\e} :\tau}
        {\G ; \D  \ts \t :\taui \cross \tauii & 
         \G ; \D,a_1 \cc \taui,a_2 \cc \tauii \ts \e : \tau}
\\[3mm]

\infer[\rlabel{case}]
{\G ; \D \ts  \caseofflat{\t}{\ai}{\taui}{\ei}{\aii}{\tauii}{\eii} : \tau}
{
\begin{array}{rcl}
\G ; \D & \ts & t : \taui \osum \tauii \\
\G ; \D,\ai\cc\taui & \ts & \ei : \tau\\
\G; \D,\aii\cc\tauii & \ts & \eii : \tau
\end{array}
}\\[4mm]

\hline
\end{array}
\]

\vspace{-2mm}
\caption{Typing judgments for terms (top) and expressions (bottom).}
\label{fig:static-semantics}
\end{figure}

The static semantics of \MFL{} consists of a set of rules for deriving
typing judgments of the form $\G;\D\ts t: \tau$, for terms, and
$\G;\D\ts e : \tau$, for expressions.  In these judgments $\G$ is a
\emph{variable type assignment}, a finite function assigning types to
variables, and $\D$ is a \emph{resource type assignment}, a finite
function assigning types to resources.  \figref{static-semantics}
shows the typing judgments for terms and expressions.

\subsection{Dynamic semantics}
\label{sec:mfl-dynamic}
\label{sec:dynamic-semantics}

\begin{figure}
\small
\[
\begin{array}{|@{~~~~~~}c@{~~~~~~}|}
\hline
~\\

\infer[\rlabel{unit}]
{\tis{\ms}{\uno} \treduces \tos{\uno}{\ms}}
{\strut}
\qquad
\infer[\rlabel{number}]
{\tis{\ms}{n} \treduces \tos{n}{\ms}}
{\strut}

\\[3mm]

\infer[\rlabel{primitive}]
{\tis{\ms}{\op{\ti,\ldots,t_n}} \treduces \tos{\OpApply{\oper}{(\vi,\ldots,\v_n}}{\ms_n}}
{
\tis{\ms}{\ti} & \treduces & \tos{\vi}{\ms_1} & \ldots & 
\tis{\ms_{n-1}}{\t_n} & \treduces & \tos{\v_n}{\ms_n}
}
\\[3mm]

\infer[\rlabel{pair}]
{\tis{\ms}{\pair{\ti}{\tii}} \treduces \tos{\pair{\vi}{\vii}}{\mspp}}
{
\begin{array}{rcl}
\tis{\ms}{\ti} & \treduces & \tos{\vi}{\msp}\\
\tis{\msp}{\tii} & \treduces & \tos{\vii}{\mspp}
\end{array}
}       
\\[3mm]


\infer[\rlabel{fun}]
{\tis{\ms}{\fun{f}{a}{\taui}{\tauii}{\e}} \treduces 
 \tos{\funval{l}{f}{a}{\taui}{\tauii}{\e}}{\ms[l \mapsto \emptyset]}
}
{
\begin{array}{c}
(l \not\in \dom{\ms})
\end{array}
}
\\[3mm]

\infer[\rlabel{fun val}]
{\tis{\ms}{\funval{l}{f}{a}{\taui}{\tauii}{\e}} \treduces\tos{\funval{l}{f}{a}{\taui}{\tauii}{\e}}{\ms}}
{
(l\in\dom{\ms})
}
\\[3mm]

\infer[\rlabel{apply}]
{\tis{\ms}{\apply{\ti}{\tii}} \treduces \tos{\v}{\msp}}
{
\begin{array}{rcl}
\tis{\ms}{\ti} & \treduces & \tos{\vi}{\msi} \\
\tis{\msi}{\tii} & \treduces & \tos{\vii}{\msii} \\
\eis{\msii}{l}{\nullbr}{[\vi,\vii/f,a]~e} & \ereduces & \eos{\v}{\msp} \\
\multicolumn{3}{c}{
(\vi = \funval{l}{f}{a}{\taui}{\tauii}{\e})
}
\end{array}
}       
\\[3mm]

\infer[\rlabel{bang}]
{\tis{\ms}{\bang{\t}} \treduces \tos{\bang{\v}}{\msp}}
{\tis{\ms}{\t} \treduces \tos{\v}{\msp}}
\\[3mm]

\infer[\rlabel{case/inl}]
{\tis{\ms}{\inl{\t}{\taui}{\tauii}} \treduces \tos{\inl{\v}{\taui}{\tauii}}{\msp}}
{\tis{\ms}{\t} \treduces \tos{\v}{\msp}}
\qquad
\infer[\rlabel{case/inr}]
{\tis{\ms}{\inr{\t}{\taui}{\tauii}} \treduces \tos{\inr{\v}{\taui}{\tauii}}{\msp}}
{\tis{\ms}{\t} \treduces \tos{\v}{\msp}}
\\[3mm]

\infer[\rlabel{roll}]
{\tis{\ms}{\roll{\t}} \treduces \tos{\roll{\v}}{\msp}}
{\tis{\ms}{\t} \treduces \tos{\v}{\msp}}
\qquad
\infer[\rlabel{unroll}]
{\tis{\ms}{\unroll{\t}} \treduces \tos{\v}{\msp}}
{\tis{\ms}{\t} \treduces \tos{\roll{\v}}{\msp}}\\[4mm]





\hline
\end{array}
\]

\vspace*{-4mm}
\caption{Evaluation of terms.}
\label{fig:term-dynamic}

\end{figure}

\begin{figure}
\small
\[
\begin{array}{|@{~~~~}c@{~~~~}|}
\hline
~\\

\infer  [\rlabel{return/found}]
{\eis{\ms}{l}{\br}{\ret{\t}} \ereduces \eos{\v}{\ms}}
{\ms(l)(\br)=\v} 
\\[3mm]

\infer  [~\rlabel{return/not found}]
{\eis{\ms}{l}{\br}{\ret{\t}} \ereduces \eos{\v}{\msp[l \leftarrow \theta'[\br\mapsto\v]]}}
{       
\begin{array}{c}
\ms(l) = \theta \quad \theta(\br)\uparrow \\
\tis{\ms}{\t}  \treduces \tos{\v}{\msp}\\
\msp(l) = \theta'
\end{array}
}
\\[3mm]

\infer[\rlabel{let!}]
{\eis{\ms}{l}{\br}{\letbin{x:\eta}{\t}{\e}} \ereduces  \eos{\vp}{\mspp}}
{\begin{array}{rcl} 
\tis{\ms}{\t} & \treduces & \tos{\bang{\v}}{\msp}\\
\eis{\msp}{l}{\consbr{\bangev{\v}}{\br}}{[\v/x]e} & \treduces & \tos{\vp}{\mspp}
\end{array}}
\\[3mm]

\infer[\rlabel{let$\cross$}]
{\eis{\ms}{l}{\br}{\letpin{a_1}{a_2}{\t}{\e}} \treduces  \eos{\v}{\mspp}}
{\begin{array}{rcl}     
\tis{\ms}{\t} & \treduces  & \tos{\vi \cross \vii}{\msp}\\
\eis{\msp}{l}{\br}{[\vi/a_1,\vii/a_2]e} & \ereduces  & \eos{\v}{\mspp}\\
\end{array} 
} 
\\[3mm]

\infer[\rlabel{case/inl}]
{\eis{\ms}{l}{\br}{\caseofflat{\t}{\ai}{\taui}{\ei}{\aii}{\tauii}{\eii}} \treduces \eos{\vi}{\mspp}}
{
\begin{array}{rcl}
\tis{\ms}{\t}  & \treduces & \tos{\inl{\v}{\taui}{\tauii}}{\msp} \\
\eis{\msp}{l}{\consbr{\inlev}{\br}}{[\v/\ai]\ei} & \ereduces & \eos{\vi}{\mspp} 
\end{array}
}
\\[3mm]

\infer[\rlabel{case/inr}]
{\eis{\ms}{l}{\br} {\caseofflat{\t}{\ai}{\taui}{\ei}{\aii}{\tauii}{\eii}}\treduces \eos{\vii}{\mspp}
}
{
\begin{array}{rcl}
\tis{\ms}{\t}  & \treduces & \tos{\inr{\v}{\taui}{\tauii}}{\msp} \\
\eis{\msp}{l}{\consbr{\inrev}{\br}}{[\v/\aii]\eii} & \ereduces & \eos{\vii}{\mspp} 
\end{array}
}\\[4mm]

\hline

\end{array}
\]

\vspace*{-4mm}
\caption{Evaluation of expressions.}
\label{fig:expr-dynamic}
\end{figure}

The dynamic semantics of \MFL{} formalizes selective memoization.
Evaluation is parameterized by a store containing memo tables that
track the behavior of functions in the program.  Evaluation of a
function expression allocates an empty memo table and associates it
with the function.  Application of a memoized function is affected by,
and may affect, its memo table.  When the function value becomes
inaccessible, so is its associated memo table and the storage required
for both can be reclaimed.

Unlike conventional memoization, however, the memo table is keyed by
control flow information rather than by the values of arguments to
memoized functions.  This is the key to supporting selective
memoization.  Expression evaluation is essentially an exploration of
the available resources culminating in a resource-free term that
determines its value.  Since the exploration is data-sensitive, only
certain aspects of the resources may be relevant to a particular
outcome.  For example, a memoized function may take a pair of integers
as argument, with the outcome determined independently of the second
component in the case that the first is positive.  By recording
control-flow information during evaluation, we may use it to provide
selective memoization.

For example, in the situation just described, all pairs of the form
$\pair{0}{\v}$ should map to the same result value, irrespective of the value
$\v$.  In conventional memoization the memo table would be keyed by the pair,
with the result that redundant computation is performed in the case that the
function has not previously been called with $\v$, even though the value of
$\v$ is irrelevant to the result!  In our framework we instead key the memo
table by a ``branch'' that records sufficient control flow information to
capture the general case.  Whenever we encounter a \textcd{return} statement,
we query the memo table with the current branch to determine whether this
result has been computed before.  If so, we return the stored value; if not, we
evaluate the \textcd{return} statement, and associate that value with that
branch in the memo table for future use.  It is crucial that the returned term
not contain any resources so that we are assured that its value does not change
across calls to the function.

The dynamic semantics of \MFL{} is given by a set of rules for deriving
judgments of the form $\tis{\ms}{\t}\treduces\tos{\v}{\msp}$ (for terms) and
$\eis{\ms}{l}{\br}{\e}\ereduces\eos{\v}{\msp}$ (for expressions).  The rules
for deriving these judgments are given in Figures~\ref{fig:term-dynamic}
and~\ref{fig:expr-dynamic}.  These rules make use of branches, memo tables, and
stores, whose precise definitions are as follows.

A \emph{simple branch} is a list of \emph{simple events} corresponding to
``choice points'' in the evaluation of an expression.
\begin{displaymath}
  \begin{array}{lrcl}
    \textit{Simple Event}  & \ev & \bnfdef & \bangev{\v} \bnfalt \inlev \bnfalt
    \inrev \\
    \textit{Simple Branch} & \br & \bnfdef & \nullbr \bnfalt \consbr{\ev}{\br}
  \end{array}
\end{displaymath}
We write $\extbr{\br}{\ev}$ to stand for the extension of $\br$ with the event
$\ev$ at the end.

A \emph{memo table}, $\theta$, is a finite function mapping simple
branches to values.  We write $\theta[\br\mapsto\v]$, where
$\br\notin\dom{\theta}$, to stand for the extension of $\theta$ with
the given binding for $\br$.  We write $\theta(\br)\uparrow$ to mean
that $\beta\notin\dom{\theta}$.

A \emph{store}, $\ms$, is a finite function mapping \emph{locations}, $l$, to
memo tables.  We write $\ms[l\mapsto\theta]$, where $l\notin\dom{\ms}$, to
stand for the extension of $\ms$ with the given binding for $l$.  When
$l\in\dom{\ms}$, we write $\ms[l\leftarrow\theta]$ for the store $\ms$ that
maps $l$ to $\theta$ and $l'\not=l$ to $\ms(l')$.

Term evaluation is largely standard, except for the evaluation of (memoizing)
functions and applications of these to arguments.  Evaluation of a memoizing
function term allocates a fresh memo table, which is then associated with the
function's value.  Expression evaluation is initiated by an application of a
memoizing function to an argument.  The function value determines the memo
table to be used for that call.  Evaluation of the body is performed relative
to that table, initiating with the null branch.

Expression evaluation is performed relative to a ``current'' memo table and
branch.  When a \textcd{return} statement is encountered, the current memo
table is consulted to determine whether or not that branch has previously been
taken.  If so, the stored value is returned; otherwise, the argument term is
evaluated, stored in the current memo table at that branch, and the value is
returned.  The \textcd{let!} and \textcd{mcase} expressions extend the current
branch to reflect control flow.  Since \textcd{let!} signals dependence on a
complete value, that value is added to the branch.  Case analysis, however,
merely extends the branch with an indication of which case was taken.  The
\textcd{let$\times$} construct does not extend the branch, because no
additional information is gleaned by splitting a pair.

\subsection{Soundness of \MFL{}}
\label{sec:mfl::soundness}
We prove the soundness of \MFL{} relative to a non-memoizing semantics
for the language.  It is straightforward to give a purely functional
semantics to \MFL{} by an inductive definition of the relations
$\t\treducespure \v$ and $\e\ereducespure\v$, where $\v$ is a
\emph{pure value} with no location subscripts (see, for
example,~\citep{pfenning+:judgmental}).  We show that memoization does
not affect the outcome of evaluation as compared to the non-memoized
semantics (\thmref{mfl-is-sound}).  To make this precise, we must
introduce some additional machinery.

The \emph{underlying term}, $\ers{\t}$, of a term, $\t$, is obtained by erasing
all location subscripts on function values occurring within $\t$.  The
\emph{underlying expression}, $\ers{\e}$, of an expression, $\e$, is defined in
the same way.  As a special case, the \emph{underlying value}, $\ers{\v}$, of a
value, $\v$, is the underlying term of $\v$ regarded as a term.  It is easy to
check that every pure value arises as the underlying value of some impure
value.  Note that passage to the underlying term or expression obviously
commutes with substitution.  The \emph{underlying branch}, $\ers{\beta}$, of a
simple branch, $\beta$, is obtained by replacing each event of the form
$\bang{\v}$ in $\beta$ by the corresponding underlying event,
$\bang{(\ers{\v})}$.

\smallskip

The partial \emph{access functions}, $\atbr{\t}{\br}$ and $\atbr{\e}{\br}$,
where $\br$ is a simple branch, and $\t$ and $e$ are variable-free (but not
necessarily resource-free), are defined as follows.  The definition may be
justified by lexicographic induction on the structure of the branch followed by
the size of the expression.
\begin{displaymath}
  \begin{array}{rcl}
    \atbr{\t}{\br} & = & \atbr{\e}{\br}
    \\
    \multicolumn{3}{r}{
      (\textit{where}\ \t=\fun{f}{a}{\taui}{\tauii}{\e})
    }
    \\ \\
    \atbr{\ret{\t}}{\nullbr} & = & \ret{\t} \\
    \atbr{\letbin{x\cdcolon\tau}{\t}{\e}}{\bangbr{\br}{\v}} & = & \atbr{\subst{v}{x}{\e}}{\br} \\
    \atbr{\letpin{a_1\cdcolon\taui}{a_2\cdcolon\tauii}{\t}{\e}}{\br} & = &
    \atbr{\e}{\br} \\
    \atbr{\caseofflat{\t}{a_1}{\taui}{\e_1}{a_2}{\tauii}{\e_2}}{\inlbr{\br}} &
    = & \atbr{\e_1}{\br} \\
    \atbr{\caseofflat{\t}{a_1}{\taui}{\e_1}{a_2}{\tauii}{\e_2}}{\inrbr{\br}} &
    = & \atbr{\e_2}{\br} \\
  \end{array}
\end{displaymath}
This function will only be of interest in the case that $\atbr{\e}{\br}$ is a
\textcd{return} expression, which, if well-typed, cannot contain free
resources.  Note that $\ers{(\atbr{\e}{\br})} = \atbr{\ers{\e}}{\ers{\br}}$,
and similarly for values, $\v$.

We are now in a position to justify a subtlety in the second \textcd{return}
rule of the dynamic semantics, which governs the case that the returned value
has not already been stored in the memo table.  This rule extends, rather than
updates, the memo table with a binding for the branch that determines this
\textcd{return} statement within the current memoized function.  But why, after
evaluation of $\t$, is this branch undefined in the revised store, $\msp$?  If
the term $\t$ were to introduce a binding for $\br$ in the memo table
$\sigma(l)$, it could only do so by evaluating the very same \textcd{return}
statement, which implies that there is an infinite loop, contradicting the
assumption that the \textcd{return} statement has a value, $\v$.
\begin{lemma}
\label{lemma:at-undef}
  If $\tis{\ms}{\t}\treduces\tos{v}{\msp}$, $\atbr{\sigma(l)}{\br}=\ret{\t}$,
  and $\ms(l)(\br)$ is undefined, then $\msp(l)(\br)$ is also undefined.
\end{lemma}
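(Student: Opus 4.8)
The plan is to prove the lemma by induction on the height of the derivation $\tis{\ms}{\t}\treduces\tos{v}{\msp}$, carried out simultaneously with the analogous statement for expression evaluations, and resting on two elementary observations about the dynamic semantics. The first is \emph{store monotonicity}: inspecting \figref{term-dynamic} and \figref{expr-dynamic}, the only ways a store is altered are the allocation of a fresh empty memo table by rule \rlabel{fun} and the single-entry extension $\theta'[\br\mapsto v]$ by rule \rlabel{return/not found}; neither removes a location from the store or a binding from a memo table, so once $\ms(l)(\br)$ is defined it remains defined in every store reachable afterward. The second is \emph{argument independence of the return statement at a slot}: a binding for $(l,\br)$ can be created only by an instance of \rlabel{return/not found} whose current location is $l$ and current branch is $\br$, and the term returned there is the one obtained by threading $\br$ through the body of the unique function value carrying location $l$ (via the access function $\atbr{\cdot}{\cdot}$). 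Because the body of a \textcd{return} is typed in the empty resource context, it is resource-free; hence the substitution of the function's argument for its formal resource --- and of all the further resources introduced by \textcd{let$\times$} and \textcd{mcase} while threading down the branch --- acts vacuously on it, and the only substitutions that survive into the returned term are the one for the recursive name $f$ and those of the \textcd{let!}-bound variables, the latter being exactly the $\bangev{v}$ events recorded in $\br$. Consequently the \textcd{return} statement reached at $(l,\br)$ is a single fixed syntactic object $\ret{\t}$ --- the one named in the hypothesis --- regardless of how often, or with what argument, the function at $l$ is re-entered during the computation.

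With these in hand, the argument for the lemma runs as follows. Assume toward a contradiction that $\msp(l)(\br)$ is defined. By monotonicity the derivation $\mathcal D$ of $\tis{\ms}{\t}\treduces\tos{v}{\msp}$ must contain at least one instance of \rlabel{return/not found} whose current slot is $(l,\br)$; by argument independence each such instance evaluates exactly the term $\ret{\t}$ from the hypothesis, so its term premise is a \emph{strictly shorter} derivation evaluating the same $\t$ in a store that --- by that rule's own $\theta(\br)\uparrow$ premise --- again leaves $(l,\br)$ undefined and again has $\ret{\t}$ as the return statement there, i.e.\ it again meets the hypotheses of the lemma. These instances are nested inside one another according to the tree shape of $\mathcal D$ and, $\mathcal D$ being finite, form a finite forest; were this forest nonempty, one could pass repeatedly from an instance to a nested one, but the induction hypothesis applied to the (shorter) term premise of a \emph{deepest} such instance would force that premise to leave $(l,\br)$ undefined, so that instance would legitimately install a binding for $(l,\br)$, which then --- by monotonicity --- is already present by the time the enclosing instance's extension step $\theta''[\br\mapsto v']$ runs, rendering that step ill-formed; and in the one remaining configuration, where there is only a single, top-level such instance and no enclosing one inside $\mathcal D$, the same clash occurs one level up, since $\mathcal D$ is itself the term premise of the \rlabel{return/not found} instance that the lemma is invoked to justify. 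In other words: evaluating the body of the \textcd{return} at $(l,\br)$ can write to $(l,\br)$ only by re-evaluating that very \textcd{return} while the slot is still undefined, and this forces an infinite regress of such re-evaluations, contradicting the finiteness of $\mathcal D$ (equivalently, the assumption that $\t$ has a value). Hence $\msp(l)(\br)$ is undefined.

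The step I expect to be the main obstacle is formulating the inductive invariant so that it really does transfer to every sub-derivation encountered in this descent. A term evaluation carries no ``current slot'', yet it spawns expression evaluations that do (through rule \rlabel{apply}), so the companion statement for $\eis{\ms}{l'}{\br'}{e}\ereduces\eos{v}{\msp}$ must keep enough information about the branch prefix $\br'$ and the chain of enclosing calls to express ``slots that are still undefined and currently sit on the evaluation spine''; getting this statement right --- and checking the routine but numerous cases of the induction --- is where the real work lies. The argument-independence observation must itself be established by the same lexicographic induction (on the branch, then on expression size) that justifies the definition of $\atbr{\cdot}{\cdot}$, carefully tracking the accumulated substitutions and invoking resource-freeness of \textcd{return} bodies precisely where the branch terminates; the monotonicity bookkeeping and the forest argument are then straightforward.
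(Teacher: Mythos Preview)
Your proposal takes essentially the same approach as the paper. The paper's entire ``proof'' is the single informal sentence immediately preceding the lemma statement: if $\t$ were to install a binding at $(l,\br)$, it could only do so by re-evaluating the very same \textcd{return} statement, giving an infinite loop and contradicting termination --- which is precisely your infinite-regress argument, and your proposal simply attempts to make that sketch rigorous via store monotonicity, the resource-freeness of \textcd{return} bodies, and an induction on derivation height.
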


\smallskip

An \emph{augmented branch}, $\gbr$, is an extension of the notion of branch in
which we record the bindings of resource variables.  Specifically, the argument
used to call a memoized function is recorded, as are the bindings of resources
created by pair splitting and case analysis.  Augmented branches are
inductively defined by the following grammar:
\begin{displaymath}
  \begin{array}{lrcl}
    \textit{Augmented Event} & \gev & \bnfdef & \callgev{\v} \bnfalt
    \banggev{\v} \bnfalt \pairgev{\vi}{\vii} \bnfalt \inlgev{\v} \bnfalt
    \inrgev{\v} \\ 
    \textit{Augmented Branch} & \gbr & \bnfdef & \nullgbr \bnfalt
    \consgbr{\gev}{\gbr} \\
  \end{array}
\end{displaymath}
We write $\extgbr{\gbr}{\gev}$ for the extension of $\gbr$ with $\gev$ at the
end.  There is an obvious \emph{simplification} function, $\simp{\gbr}$, that
yields the simple branch corresponding to an augmented branch by dropping
``call'' events, $\callgev{\v}$, and ``pair'' events, $\pairgev{\vi}{\vii}$,
and by omitting the arguments to ``injection'' events, $\inlgev{\v}$,
$\inrgev{\v}$.  The \emph{underlying augmented branch}, $\ers{\gbr}$,
corresponding to an augmented branch, $\gbr$, is defined by replacing each
augmented event, $\gev$, by its corresponding underlying augmented event,
$\ers{\gev}$, which is defined in the obvious manner.  Note that
$\ers{(\simp{\gbr})} = \simp{(\ers{\gbr})}$.

The partial access functions $\atgbr{\e}{\gbr}$ and $\atgbr{\t}{\gbr}$ are
defined for closed expressions $e$ and closed terms $t$ by the following
equations:
\begin{displaymath}
  \begin{array}{rcl}
    \atgbr{\t}{\callgbr{\gbr}{\v}} & = & \atgbr{\subst{\t,\v}{f,a}{\e}}{\gbr}
    \\
    \multicolumn{3}{r}{
      (\textit{where}\ \t=\fun{f}{a}{\taui}{\tauii}{\e})
    }
    \\ \\
    \atgbr{\e}{\nullgbr} & = & e \\
    \atgbr{\letbin{x\cdcolon\tau}{\t}{\e}}{\banggbr{\gbr}{\v}} & = &
    \atgbr{\subst{\v}{x}{\e}}{\gbr} \\
    \atgbr{\letpin{a_1\cdcolon\taui}{a_2\cdcolon\tauii}{\t}{\e}}{\splitgbr{\br}{\vi}{\vii}} & = & \atgbr{\subst{\vi,\vii}{a_1,a_2}{\e}}{\br} \\
    \atgbr{\caseofflat{\t}{a_1}{\taui}{\e_1}{a_2}{\tauii}{\e_2}}{\inlgbr{\br}{\v}} & = & \atgbr{\subst{\v}{a_1}{\e_1}}{\br} \\
    \atgbr{\caseofflat{\t}{a_1}{\taui}{\e_1}{a_2}{\tauii}{\e_2}}{\inrgbr{\br}{\v}} & = & \atgbr{\subst{\v}{a_2}{\e_2}}{\br} \\
  \end{array}
\end{displaymath}
Note that $\ers{(\atgbr{\e}{\gbr})} = \atgbr{\ers{\e}}{\ers{\gbr}}$, and
similarly for values, $\v$.

Augmented branches, and the associated access function, are needed for the
proof of soundness.  The proof maintains an augmented branch that enriches the
current simple branch of the dynamic semantics.  The additional information
provided by augmented branches is required for the induction, but it does not
affect any \textcd{return} statement it may determine.
\begin{lemma}
\label{lemma:at-simp}
  If $\atgbr{\e}{\gbr}=\ret{\t}$, then $\atbr{\e}{\simp{\gbr}}=\ret{\t}$.
\end{lemma}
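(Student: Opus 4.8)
The plan is to prove the statement by induction on the length of the augmented branch $\gbr$ (every clause defining $\atgbr{\cdot}{\cdot}$ recurses on a branch obtained by stripping off the last event, so $\gbr$ strictly shrinks), carried out for expressions only: since $\e$ is an expression the clause for $\callgbr{\gbr'}{\v}$ never applies, and in every remaining case the shape of the last event of $\gbr$ forces the outermost form of $\e$ — otherwise $\atgbr{\e}{\gbr}$ is undefined and the hypothesis $\atgbr{\e}{\gbr}=\ret{\t}$ is vacuous. When $\gbr=\nullgbr$, $\atgbr{\e}{\nullgbr}=\e$, so the hypothesis forces $\e=\ret{\t}$, and since $\simp{\nullgbr}=\nullbr$ we get $\atbr{\e}{\nullbr}=\atbr{\ret{\t}}{\nullbr}=\ret{\t}$. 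When $\gbr=\banggbr{\gbr'}{\v}$, $\e=\letbin{x\cdcolon\eta}{\t_0}{\e_0}$ and $\atgbr{\e}{\gbr}=\atgbr{[\v/x]\e_0}{\gbr'}$; since $\simp{}$ leaves bang events intact, $\simp{\gbr}=\bangbr{\simp{\gbr'}}{\v}$, so $\atbr{\e}{\simp{\gbr}}=\atbr{[\v/x]\e_0}{\simp{\gbr'}}$, and the induction hypothesis at the shorter branch $\gbr'$ (applied to $[\v/x]\e_0$) closes the case.

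The substantive cases are \textcd{let$\times$} and \textcd{mcase}, and they contain the only real difficulty. Conceptually, $\atgbr{\e}{\gbr}$ and $\atbr{\e}{\simp{\gbr}}$ perform the same structural descent through $\e$ driven by (the simplification of) $\gbr$; the only discrepancy is that $\atgbr{}{}$ additionally threads substitutions of values for resources through the expression — in exactly the \textcd{let$\times$} and \textcd{mcase} clauses — whereas $\simp{}$ has thrown away the data those substitutions use. Such substitutions cannot change the outcome because the descent never inspects a term subexpression except to read off the body of a \textcd{return}, which is resource-free. Concretely, for \textcd{let$\times$} write $\e=\letpin{a_1\cdcolon\taui}{a_2\cdcolon\tauii}{\t_0}{\e_0}$ and $\gbr=\splitgbr{\gbr'}{\vi}{\vii}$, so $\atgbr{\e}{\gbr}=\atgbr{[\vi,\vii/a_1,a_2]\e_0}{\gbr'}$, whereas $\simp{}$ deletes the pair event, giving $\simp{\gbr}=\simp{\gbr'}$ and $\atbr{\e}{\simp{\gbr}}=\atbr{\e_0}{\simp{\gbr'}}$. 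The induction hypothesis at $\gbr'$ applied to the closed expression $[\vi,\vii/a_1,a_2]\e_0$ gives $\atbr{[\vi,\vii/a_1,a_2]\e_0}{\simp{\gbr'}}=\ret{\t}$, and to match this with $\atbr{\e_0}{\simp{\gbr'}}$ I would use a short auxiliary lemma: (i) $\atbr{\cdot}{\cdot}$ commutes with substitution of closed values for resources, $\atbr{[\vec\v/\vec a]\e}{\br}=[\vec\v/\vec a]\,(\atbr{\e}{\br})$, with either side defined iff the other is, by a routine induction following the definition of $\atbr{}{}$ (which only descends past binders, where substitution commutes up to $\alpha$-renaming, and whose sole leaf clause reads off a \textcd{return} body); and (ii) every value of $\atbr{}{}$ is a \textcd{return} $\ret{\t_0}$ with $\t_0$ resource-free, because the typing rule for \textcd{return} types its body in the empty resource context $\emptyContext$, so $[\vi,\vii/a_1,a_2]$ is the identity on it. Hence $\atbr{\e_0}{\simp{\gbr'}}=[\vi,\vii/a_1,a_2]\,(\atbr{\e_0}{\simp{\gbr'}})$ by (ii), which by (i) equals $\atbr{[\vi,\vii/a_1,a_2]\e_0}{\simp{\gbr'}}=\ret{\t}$, as needed; the two \textcd{mcase} subcases are identical with $\inlbr{}$ / $\inrbr{}$ in place of the deleted pair event and the single-resource substitution $[\v/a_1]$ / $[\v/a_2]$ in place of $[\vi,\vii/a_1,a_2]$.

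Two remarks. The argument tacitly assumes $\e$ is well-typed — part (ii) uses the \textcd{return} typing rule — but this is a standing assumption of the soundness development, so nothing is lost. Also, an analogous statement for \emph{terms} would additionally have to reconcile the substitution of the recursive function value for $f$ introduced by the call clause, which is not in general vacuous on a \textcd{return} body (a body may contain free variables); but since this lemma is only ever applied to expressions, the expression-only induction above suffices, and I expect the substitution bookkeeping in steps (i)–(ii) to be the only part of the proof that is not completely mechanical.
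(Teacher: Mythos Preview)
The paper states this lemma without proof, so there is no paper argument to compare against directly. Your proposal is essentially correct and identifies the only nontrivial point: in the \textcd{let$\times$} and \textcd{mcase} cases the augmented access function substitutes values for resources while the simple one does not, and you need to argue that this discrepancy is invisible at the final \textcd{return}. Your auxiliary facts (i) and (ii) handle this cleanly; (i) goes through because $\atbr{}{}$ never inspects the term subterms it passes over and only substitutes \emph{closed} values (drawn from the branch) for \emph{variables}, which commute with resource substitutions since variables and resources are disjoint, and (ii) holds because the body of any syntactic \textcd{return} in a well-typed expression is typed with the empty resource context and the only substitutions $\atbr{}{}$ performs are of closed values for variables.

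Two small points worth making explicit. First, the induction is on the length of $\gbr$, but the well-typedness needed for (ii) is a property of the \emph{original} $\e$ in the lemma statement, not of the intermediate (possibly resource-open) expressions you recurse on; you should phrase (ii) as: for any subexpression $\e'$ of a well-typed $\e$, the result of $\atbr{\e'}{\br}$ (if defined) has a resource-free body. Second, since the paper applies the lemma in the proof of Theorem~\ref{thm:soundness-induction} to $\MS(l)$, which is a function \emph{term}, one also needs the (easy) observation that $\atbr{\t}{\br}=\atbr{\e}{\br}$ and $\simp{\callgbr{\gbr}{\v}}=\simp{\gbr}$, so the term case reduces immediately to the expression case; your closing remark anticipates this, and the variable substitution $[\t/f]$ introduced by the call clause is harmless for the same reason as (ii), since it does not touch resources.
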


A \emph{function assignment}, $\MS$, is a finite mapping from locations to
well-formed, closed, pure function values.  A function assignment is
\emph{consistent with} a term, $t$, or expression, $e$, if and only if whenever
$\funval{l}{f}{a}{\taui}{\tauii}{e}$ occurs in either $t$ or $e$, then
$\MS(l)=\fun{f}{a}{\taui}{\tauii}{\ers{e}}$.  Note that if a term or expression
is consistent with a function assignment, then no two function values with
distinct underlying values may have the same label.  A function assignment is
consistent with a store, $\ms$, if and only if whenever $\ms(l)(\br)=\v$, then $\MS$ is
consistent with $\v$.

A store, $\ms$, \emph{tracks} a function assignment, $\MS$, if and only if $\MS$ is
consistent with $\sigma$, $\dom{\ms}=\dom{\MS}$, and for every $l\in\dom{\ms}$,
if $\ms(l)(\br)=\v$, then
\begin{enumerate}
\item $\atbr{\MS(l)}{\ers{\br}}=\ret{\ers{\t}}$,
\item $\ers{\t}\treducespure\ers{\v}$,
\end{enumerate}
Thus if a branch is assigned a value by the memo table associated with a
function, it can only do so if that branch determines a \textcd{return}
statement whose value is the assigned value of that branch, relative to the
non-memoizing semantics.

We are now in a position to prove the soundness of \MFL{}.
\begin{theorem}
\label{thm:soundness-induction}

  \begin{enumerate}
  \item If $\tis{\ms}{\t}\treduces\tos{\v}{\msp}$, $\MS$ is consistent with
    $\t$, $\ms$ tracks $\MS$, $\emptyContext;\emptyContext\ts \t:\tau$, then
    $\ers{\t}\treducespure \ers{\v}$ and there exists $\MSP\supseteq\MS$ such
    that $\MSP$ is consistent with $\v$ and $\msp$ tracks $\MSP$.
  \item If $\eis{\ms}{l}{\br}{\e}\ereduces\eos{\v}{\msp}$, $\MS$ is consistent
    with $e$, $\ms$ tracks $\MS$, $\simp{\gbr}=\br$,
    $\atgbr{\MS(l)}{\ers{\gbr}}=\ers{\e}$, and $\emptyContext;\emptyContext\ts
    \e:\tau$, then there exists $\MSP\supseteq\MS$ such that
    $\ers{\e}\ereducespure\ers{\v}$, $\MSP$ is consistent with $\v$, and $\msp$
    tracks $\MSP$.
  \end{enumerate}
\end{theorem}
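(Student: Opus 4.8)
The plan is to prove parts~(1) and~(2) simultaneously, by induction on the derivations of $\tis{\ms}{\t}\treduces\tos{\v}{\msp}$ and $\eis{\ms}{l}{\br}{\e}\ereduces\eos{\v}{\msp}$, reading the two clauses as mutually recursive. A recurring theme is that the function assignment only grows: each appeal to an induction hypothesis returns some $\MSP\supseteq\MS$, and since ``consistent with'' (for terms, expressions, values, and stores) is preserved under enlarging the assignment, the accumulated assignment stays consistent with everything seen so far. Where the argument descends into a subderivation I would invoke the usual inversion and substitution lemmas of the static semantics, together with preservation under term evaluation, to keep the well-typedness hypotheses in force; in particular these guarantee that the term inside any \textcd{return} is resource-free (so the tracking conditions and access functions make sense) and that the ``current'' expression stays closed once its bound resources have been substituted away.

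For terms, every rule except \textbf{fun}, \textbf{fun val}, and \textbf{apply} matches its non-memoizing counterpart exactly, so the induction hypothesis supplies the pure sub-reductions and the assignment is threaded through unchanged. In \textbf{fun} a fresh location $l$ with an empty memo table is allocated; I would set $\MSP=\MS[l\mapsto\ers{v}]$ for the function value $v$, so the tracking condition at $l$ holds vacuously. In \textbf{fun val}, $l$ is already present and $\MSP=\MS$. In \textbf{apply}, after evaluating the operator to $\vi=\funval{l}{f}{a}{\taui}{\tauii}{\e}$ and the operand to $\vii$ (obtaining $\MSii\supseteq\MS$ consistent with both, with $\msii$ tracking $\MSii$, hence $\MSii(l)=\ers{(\fun{f}{a}{\taui}{\tauii}{\e})}$), I would apply part~(2) to the body derivation using the one-event augmented branch $\gbr=\callgbr{\nullgbr}{\vii}$. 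This $\gbr$ satisfies $\simp{\gbr}=\nullbr$, matching the rule, and $\atgbr{\MSii(l)}{\ers{\gbr}}$ unfolds by the defining equation for a ``call'' event to exactly $\ers{([\vi,\vii/f,a]\e)}$, as required. Part~(2) then yields $\ers{([\vi,\vii/f,a]\e)}\ereducespure\ers{\v}$, which together with the pure reductions of the operator and operand supplied by part~(1) assembles into the non-memoizing \textbf{apply} derivation for $\ers{(\apply{\ti}{\tii})}$; the output $\MSP$ is the one returned for the body.

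For expressions the substantive cases are \textcd{return} and the branch-extending forms. For \textbf{return/found}, the tracking hypothesis says that the stored value $\v$ at $\br$ in $\ms(l)$ satisfies $\atbr{\MS(l)}{\ers{\br}}=\ret{\ers{t'}}$ and $\ers{t'}\treducespure\ers{\v}$ for some $t'$, while the hypothesis $\atgbr{\MS(l)}{\ers{\gbr}}=\ers{\e}=\ret{\ers{\t}}$ gives, via Lemma~\ref{lemma:at-simp} and the commutations $\ers{(\simp{\gbr})}=\simp{(\ers{\gbr})}$ and $\ers{(\atgbr{\e}{\gbr})}=\atgbr{\ers{\e}}{\ers{\gbr}}$, that the same quantity $\atbr{\MS(l)}{\ers{\br}}$ equals $\ret{\ers{\t}}$; hence $\ers{t'}=\ers{\t}$, so $\ers{\e}\ereducespure\ers{\v}$, and $\MSP=\MS$. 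For \textbf{return/not found}, part~(1) applied to $\tis{\ms}{\t}\treduces\tos{\v}{\msp}$ gives $\ers{\t}\treducespure\ers{\v}$ and a tracking $\MSP\supseteq\MS$ for $\msp$; Lemma~\ref{lemma:at-undef} guarantees that $\br$ is still undefined in $\msp(l)$, so the extension $\msp[l\leftarrow\theta'[\br\mapsto\v]]$ is well-formed, and its new binding meets the two tracking conditions precisely because $\atbr{\MSP(l)}{\ers{\br}}=\ret{\ers{\t}}$ (argued as in the previous case, using $\MSP(l)=\MS(l)$) and $\ers{\t}\treducespure\ers{\v}$. For \textbf{let!}, pair-splitting, and the two \textbf{case} rules, I would evaluate the scrutinee by part~(1), extend $\gbr$ with the corresponding augmented event, and apply part~(2) to the continuation: simplifying the extended augmented branch reproduces exactly the simple branch used in the rule, and reading the defining equations of $\atgbr{\cdot}{\cdot}$ ``one event at a time'' (a routine composition property proved by induction on branches) shows that appending the new event to $\ers{\gbr}$ while passing to the sub-expression preserves the invariant $\atgbr{\MS(l)}{\ers{\gbr}}=\ers{\e}$.

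The main obstacle is less any individual case than the bookkeeping that links them: one must carry the augmented branch $\gbr$ (not merely the simple branch $\br$) through the entire induction so that the invariant $\atgbr{\MS(l)}{\ers{\gbr}}=\ers{\e}$ is available whenever a \textcd{return} is reached, and one must keep straight how erasure $\ers{(\cdot)}$, simplification $\simp{(\cdot)}$, and the access functions commute with each other and with substitution. Once those identities, the composition property of $\atgbr{\cdot}{\cdot}$, and Lemmas~\ref{lemma:at-undef} and~\ref{lemma:at-simp} are in hand, each case reduces to a short calculation, and the headline soundness statement $\ers{\t}\treducespure\ers{\v}$ follows from part~(1) by taking $\MS$ and $\ms$ empty.
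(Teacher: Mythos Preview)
Your proposal is correct and follows essentially the same approach as the paper's own proof: simultaneous induction on the evaluation derivation, with the same choice of $\MSP$ in the \textbf{fun} and \textbf{fun val} cases, the same one-event augmented branch $\gbr=\consgbr{\callgev{\vii}}{\nullgbr}$ in the \textbf{apply} case, and the same use of Lemmas~\ref{lemma:at-undef} and~\ref{lemma:at-simp} in the two \textbf{return} cases. You are in fact somewhat more explicit than the paper about the \textcd{let!}/\textcd{let$\times$}/\textcd{mcase} cases and about the commutation identities among $\ers{(\cdot)}$, $\simp{(\cdot)}$, and the access functions, but there is no substantive difference in strategy.
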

\begin{proof}
  The proof proceeds by simultaneous induction on the memoized evaluation
  relation.  We consider here the five most important cases of the proof:
  function values, function terms, function application terms, and return
  expressions.
  
  For function values $\t=\funval{l}{f}{a}{\taui}{\tauii}{\e}$, simply take
  $\MSP=\MS$ and note that $\v=\t$ and $\msp=\ms$.
  
  For function terms $\t=\fun{f}{a}{\taui}{\tauii}{\e}$, note that
  $\v=\funval{l}{f}{a}{\taui}{\tauii}{\e}$ and $\msp=\ms[l\mapsto\emptyset]$,
  where $l\notin\dom{\ms}$.  Let $\MSP=\MS[l\mapsto \ers{\v}]$, and note that
  since $\ms$ tracks $\MS$, and $\ms(l)=\emptyset$, it follows that $\msp$
  tracks $\MSP$.  Since $\MS$ is consistent with $\t$, it follows by
  construction that $\MSP$ is consistent with $\v$.  Finally, since
  $\ers{\v}=\ers{\t}$, we have $\ers{\t}\treducespure\ers{\v}$, as required.
  
  For application terms $\t=\apply{\ti}{\tii}$, we have by induction that
  $\ers{\ti}\treducespure\ers{\vi}$ and there exists $\MSi\supseteq\MS$
  consistent with $\vi$ such that $\msi$ tracks $\MSi$.  Since
  $\vi=\funval{l}{f}{a}{\taui}{\tauii}{\e}$, it follows from consistency that
  $\MSi(l)=\ers{\vi}$.  Applying induction again, we obtain that
  $\ers{\tii}\treducespure\ers{\vii}$, and there exists $\MSii\supseteq\MSi$
  consistent with $\vii$ such that $\msii$ tracks $\MSii$.  It follows that
  $\MSii$ is consistent with $\subst{\vi,\vii}{f,a}{\e}$.  Let
  $\gbr=\consgbr{\callgev{\vii}}{\nullgbr}$.  Note that
  $\simp{\gbr}=\nullbr=\br$ and we have
  \begin{displaymath}
    \begin{array}{rcl}
      \atgbr{\MSii(l)}{\ers{\gbr}} & = & \atgbr{\ers{\vi}}{\ers{\gbr}} \\
      & = & \ers{(\atgbr{\vi}{\gbr})} \\
      & = & \ers{(\subst{\vi,\vii}{f,a}{\e})} \\
      & = & \subst{\ers{\vi},\ers{\vii}}{f,a}{\ers{\e}}.
    \end{array}
  \end{displaymath}
  Therefore, by induction,
  $\subst{\ers{\vi},\ers{\vii}}{f,a}{\ers{\e}}\ereducespure\ers{\vp}$, and
  there exists $\MSP\supseteq\MSii$ consistent with $\vp$ such that $\msp$
  tracks $\MSP$.  It follows that $\ers{(\apply{\ti}{\tii})} =
  \apply{\ers{\ti}}{\ers{\tii}}\treducespure\ers{\vp}$, as required.
  
  For return statements, we have two cases to consider, according to whether
  the current branch is in the domain of the current memo table.  Suppose that
  $\eis{\ms}{l}{\br}{\ret{\t}}\ereduces\eos{\v}{\msp}$ with $\MS$ consistent
  with $\ret{\t}$, $\ms$ tracking $\MS$, $\simp{\gbr}=\br$,
  $\atgbr{\MS(l)}{\ers{\gbr}}=\ers{(\ret{\t})}=\ret{\ers{\t}}$, and
  $\emptyContext;\emptyContext\ts \ret{\t}:\tau$.  Note that by
  Lemma~\ref{lemma:at-simp}, $\ers{(\atbr{\MS(l)}{\br})} =
  \atbr{\MS(l)}{\ers{\br}} = \ret{\ers{\t}}$.
  
  For the first case, suppose that $\ms(l)(\br)=\v$.  Since $\ms$ tracks $\MS$
  and $l\in\dom{\ms}$, we have $\MS(l)=\fun{f}{a}{\taui}{\tauii}{\ers{\e}}$
  with $\atbr{\ers{\e}}{\ers{\br}}=\ret{\ers{\t}}$, and
  $\ers{\t}\treducespure\ers{\v}$.  Note that $\msp=\ms$, so taking $\MSP=\MS$
  completes the proof.
  
  For the second case, suppose that $\ms(l)(\br)$ is undefined.  By induction
  $\ers{\t}\treducespure\ers{\v}$ and there exists $\MSP\supseteq\MS$
  consistent with $\v$ such that $\msp$ tracks $\MSP$.  Let $\theta'=\msp(l)$,
  and note $\theta'(\br)\uparrow$, by Lemma~\ref{lemma:at-undef}.  Let
  $\theta''=\theta'[\br\mapsto\v]$, and $\mspp=\msp[l\leftarrow\theta'']$.  Let
  $\MSPP=\MSP$; we are to show that $\MSPP$ is consistent with $\v$, and
  $\mspp$ tracks $\MSPP$.  By the choice of $\MSPP$ it is enough to show that
  $\atbr{\MSP(l)}{\ers{\br}}=\ret{\ers{\t}}$, which we noted above.
  
\end{proof}

The soundness theorem (\thmref{mfl-is-sound}) for \MFL{} states that
evaluation of a program (a closed term) with memoization yields the
same outcome as evaluation without memoization.  The theorem follows
from \thmref{soundness-induction}.
\begin{theorem}[Soundness]
\label{thm:mfl-is-sound}
  If $\tis{\emptyset}{\t}\treduces\tos{\v}{\ms}$, where
  $\emptyset;\emptyset\ts \t:\tau$, then $\ers{\t}\treducespure \ers{\v}$.
\end{theorem}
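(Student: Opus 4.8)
The plan is to obtain \thmref{mfl-is-sound} as an immediate corollary of \thmref{soundness-induction}, which already carries the full inductive burden; all that remains is to instantiate it at the trivial initial configuration and check that its hypotheses hold there. Concretely, I would apply part~(1) of \thmref{soundness-induction} with store $\ms=\emptyset$ and function assignment $\MS=\emptyset$. The two evaluation/typing premises, $\tis{\emptyset}{\t}\treduces\tos{\v}{\ms}$ and $\emptyset;\emptyset\ts\t:\tau$, are exactly the hypotheses of \thmref{mfl-is-sound}, so the only obligations left are that $\emptyset$ is consistent with $\t$ and that $\emptyset$ (as a store) tracks $\emptyset$ (as a function assignment).

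For the first obligation, recall that $\MS=\emptyset$ is consistent with $\t$ precisely when $\t$ contains no occurrence of a function \emph{value} $\funval{l}{f}{a}{\taui}{\tauii}{\e}$. This is where the word ``program'' does its work: locations, and hence function values, are introduced only during evaluation (by the \rlabel{fun} rule of \figref{term-dynamic}), so a source term contains only function \emph{terms} $\fun{f}{a}{\taui}{\tauii}{\e}$, which carry no location. Thus consistency holds vacuously. For the second obligation, $\emptyset$ tracks $\emptyset$ because $\dom{\emptyset}=\dom{\emptyset}$, the consistency clause quantifies vacuously over the empty store, and the two tracking conditions quantify over $l\in\dom{\emptyset}=\emptyset$ and so hold trivially.

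Having discharged every hypothesis, \thmref{soundness-induction}(1) delivers $\ers{\t}\treducespure\ers{\v}$ (together with some $\MSP\supseteq\emptyset$ consistent with $\v$ and a store tracking it, neither of which we need), which is exactly the statement of \thmref{mfl-is-sound}. I expect no real obstacle: the proof is a one-line specialization, and the only point genuinely worth stating is the observation that source programs mention no function values, so that the empty function assignment is consistent with them and the inductive theorem applies at $\ms=\MS=\emptyset$.
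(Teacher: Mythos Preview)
Your proposal is correct and matches the paper's approach exactly: the paper's proof is the single sentence ``The theorem follows from \thmref{soundness-induction},'' and you have simply spelled out the instantiation $\ms=\MS=\emptyset$ and verified the consistency and tracking hypotheses that the paper leaves implicit. Your observation that the ``program'' reading (no location-tagged function values in a source term) is what makes $\MS=\emptyset$ consistent with $\t$ is the one nontrivial point, and it is exactly the right justification.
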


Type safety follows from the soundness theorem, since type safety holds for the
non-memoized semantics.  In particular, if a term or expression had a
non-canonical value in the memoized semantics, then the same term or expression
would have a non-canonical value in the non-memoized semantics, contradicting
safety for the non-memoized semantics.

\subsection{Asymptotic complexity}
\label{sec:mfl::complexity}

We show that memoization slows down an \mfl program by a constant
factor (expected) with respect to a standard, non-memoizing semantics
even when no results are re-used.  The result relies on representing a
branch as a sequence of integers and using this sequence to key memo
tables, which are represented with nested hash tables.

To represent branches as integer sequences we use the property of \mfl
that the underlying type $\eta$ of a bang type, $\bang{\eta}$, is an
indexable type. Since any value of an indexable type has an integer
index, we represent a branch as sequence of integers corresponding to
the indices of \nletbang'ed values, and zero or one for \textcd{inl}
and \textcd{inr}.

We represent memo tables as nested hash tables.  A nested hash table
is a tree of hash tables consisting of internal hash tables and
external hash tables (leaves).  Internal hash tables map an integer
(an index) to another hash table.  External hash tables map an integer
to the result of the function.  Given a branch $\beta$ of length $m$
(consisting of $m$ indices), a lookup proceeds by indexing each key in
order starting at the root of the nested hash table.  Each lookup
except for the last returns a hash table, which is then used for the
next lookup with the next index.  The last lookup returns the desired
result in the case of a memo hit, or nothing in the case of a memo
miss.  Since a lookup takes expected constant time, a lookup with a
branch of length $m$ takes $O(m)$ time.  The same bounds holds for
update operations (insertions, deletions).

\begin{theorem}
  The overhead of an \mfl program with respect to a pure,
  non-memoizing semantics is expected $O(1)$, where the expectation is
  over internal randomization used for hash tables. 
\end{theorem}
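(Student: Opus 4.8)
The plan is to instrument both the memoizing dynamic semantics of Figures~\ref{fig:term-dynamic} and~\ref{fig:expr-dynamic} and the non-memoizing semantics ($\treducespure$, $\ereducespure$) with step counts --- each inference rule costs one unit plus the sum of its premises' costs --- and then compare the two totals over a whole program run. In the memoizing semantics I additionally charge the $O(1)$ cost of allocating a fresh empty nested hash table to each \textbf{fun} rule, and the cost of a memo-table lookup (for \textbf{return/found}) or of a lookup together with an insertion (for \textbf{return/not found}) with the current branch $\br$. A branch is represented as the integer sequence obtained by replacing each event $\bangev{\v}$ by the index of $\v$ and $\inlev$, $\inrev$ by $0$, $1$; computing an index is $O(1)$ because every indexable type carries an $O(1)$ index function (and a boxed value's index is its location), and appending one integer to the branch as it is threaded through a \textbf{let!} or \textbf{mcase} step is $O(1)$, which I fold into that rule's unit cost. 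With memo tables represented as nested hash tables, a lookup or insertion along a branch of length $m$ performs $m$ hash-table operations, each of expected constant cost --- collisions being resolved by comparing integer keys, which is exact by the uniqueness of indices --- and so costs expected $O(m)$.

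First I would observe that the control flow of the memoizing evaluation --- the shape of the derivation, which branches occupy which memo tables, which \textbf{return}s are hits and which are misses --- is determined entirely by the program and its input, independently of the random hash functions, which only implement the finite maps $\theta$ without changing their domains. Hence every quantity below is deterministic, and the expectation may be taken at the very end by linearity. Write $T_p$ for the cost of the (whole) non-memoizing evaluation, which exists by \thmref{soundness-induction}, and $T_m^{\circ}$ for the \emph{structural} cost of the memoizing evaluation, i.e.\ its instrumented cost with the hash-table allocation/lookup/insertion charges stripped off. Attaching costs to the case analysis in the proof of \thmref{soundness-induction} gives $T_m^{\circ}\le T_p$: every memoizing rule other than \textbf{return/found} pairs one-to-one with the corresponding rule of the non-memoizing semantics and contributes one unit on each side, whereas a \textbf{return/found} step contributes a single unit on the memoizing side but corresponds to a re-evaluation of $\ers{\t}$ (which, by \thmref{soundness-induction}, yields $\ers{\v}$) on the non-memoizing side, and that re-evaluation costs at least one unit. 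Memo hits thus only make the memoizing derivation smaller than its non-memoizing counterpart.

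It remains to bound the hash-table charges. There is one allocation per \textbf{fun} rule, and a \textbf{fun} step of the memoizing evaluation corresponds to a function-value creation in the non-memoizing evaluation, so the number of allocations is at most $T_p$, contributing $O(T_p)$. For the lookup/insertion charges, note that each invocation of a memoized function starts a fresh branch $\nullbr$ and descends a single chain of \textbf{let!}, \textbf{let$\times$}, and \textbf{mcase} expression steps --- extending the branch by one event at each \textbf{let!} or \textbf{mcase} and by nothing at each \textbf{let$\times$} --- before reaching exactly one \textbf{return}. Hence, writing $m_i$ for the branch length at the $i$-th \textbf{return} evaluated, $\sum_i m_i$ equals the total number of \textbf{let!} and \textbf{mcase} steps of the memoizing evaluation, which is at most $T_m^{\circ}\le T_p$. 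The $i$-th \textbf{return} incurs expected $O(m_i)$ for its lookup, plus, on a miss, a further expected $O(m_i)$ for its insertion, so by linearity the total expected lookup/insertion cost is $O\!\left(\sum_i m_i\right)=O(T_p)$. Summing the pieces, $\expect{T_m} = T_m^{\circ} + O(T_p) = O(T_p)$: memoization slows the program down by at most an expected constant factor.

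The crux is the amortization in the previous paragraph: the cost at a \textbf{return} is not $O(1)$ but $O(m_i)$, and the bound $\sum_i m_i\le T_p$ leans on two features of the semantics that must be invoked explicitly --- that branches are local to a function invocation, being reset to $\nullbr$ at every application so that their lengths do not accumulate down the call tree, and that each invocation reaches exactly one \textbf{return}, so that the step-sets charged by the various $m_i$ are pairwise disjoint (in fact they partition all \textbf{let!}/\textbf{mcase} steps). A lesser subtlety is pinning down the correspondence with the non-memoizing derivation at a memo hit, where an entire non-memoizing subderivation collapses to one memoizing step: one reuses the correspondence already built in the proof of \thmref{soundness-induction}, merely tracking costs as well, and the inequality points the right way exactly because a hit can only make the memoizing run cheaper.
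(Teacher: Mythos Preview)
Your proof is correct and follows essentially the same approach as the paper: amortize the expected $\Theta(m)$ cost of each memo-table lookup/update against the $\Theta(m)$ structural steps spent building that branch, so the per-branch overhead is $O(1)$. Your treatment is more careful than the paper's---you explicitly handle memo hits via the soundness induction (the paper simply asserts the no-hit case is worst), separately bound the allocation charges, and justify that the branch-building steps charged by the various $m_i$ are disjoint---but the underlying argument is the same.
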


\begin{proof}
  Consider a non-memoizing semantics, where the \textcd{return} rule
  always evaluates its body and neither looks up nor updates memo
  tables (stores). Consider an \MFL{} program and let $T$ denote the
  time (the number of evaluation steps) it takes to evaluate the
  program with respect to this non-memoizing semantics.  Let $T'$
  denote the time it takes to evaluate the same program with respect
  to the memoizing semantics.  In the worst case, no results are
  re-used, thus the difference between $T$ and $T'$ is due to
  memo-table operations (lookups and updates) performed by the
  memoizing semantics.

  To bound the time for memo table operations, consider a memo-table
  operation with a branch $\br$ and let $m$ be the length of the
  branch.  With nested hash tables, the operation requires expected
  $\Theta(m)$ time.  Since the non-memoizing semantics takes
  $\Theta(m)$ time to build the branch, the overhead of the memo-table
  operations is expected $O(1)$.  Since a branch is used to perform
  only a constant number of memo-table operations (one lookup and one
  update) we conclude that overhead of selective memoization is $O(1)$
  in expectation.
\end{proof}

\section{Implementation}
\label{sec:implementation}
\label{sec:imp}
We describe an implementation of the \mfl language as a Standard ML
library.  Since the library cannot differentiate between resources and
variables syntactically, it uses a separate type for resources.  The
library therefore cannot enforce statically the aspects of \mfl that
rely on the syntactic distinction between resources; instead it
employs run-time checks to detect violations of correct
usage.\footnote{We describe elsewhere a library for Standard ML that
  can in fact enforce the \mfl type system
  statically~\citep{AcarBlBlHaTa06}.  The approach, however, does not
  scale well.}

\begin{figure}
\centering{\fbox{
\parbox[t]{1.5in}{
\begin{codeListingS}
XX\=XX\=XX\=XX\=XX\=XX\=XX\=XX\=XX\=XX\=XX\=XX\=XXXXXXXXXXXXXXXXXXX\=\kill
signature MEMO = \\ 
sig \\
\>(* Expressions *)\\
\>type 'a expr\\
\>val return:(unit -> 'a) -> 'a expr\\[2mm]

\>(* Resources *)\\
\>type 'a res~ \\
\>val expose:'a res -> 'a\\[2mm]

\> (* Bangs *)\\
\>type 'a bang\\
\>val bang:('a -> int) -> 'a -> 'a bang\\
\>val letBang:('a bang) -> ('a -> 'b expr) -> 'b expr\\[2mm]

\>(* Products *)\\
\>type ('a,'b) prod\\
\>val pair:'a -> 'b -> ('a,'b) prod\\
\>val letx:('a,'b) prod -> (('a res * 'b res) -> 'c expr) -> 'c expr\\
\>val split:('a,'b) prod -> (('a * 'b) -> 'c) -> 'c \\[2mm]

\>(* Sums *)\\
\>type ('a,'b) sum\\
\>val inl:'a -> ('a,'b) sum\\
\>val inr:'b -> ('a,'b) sum\\
\>val mcase:('a,'b) sum -> ('a res -> 'c expr) -> ('b res -> 'c expr) -> 'c expr\\
\>val choose:('a,'b) sum -> ('a -> 'c) -> ('b -> 'c) -> 'c \\[2mm]

\>(* Memoized arrow *)\\
\>type ('a,'b) marrow\\
\>val mfun:('a res -> 'b expr) -> ('a,'b) marrow\\
\>val mfun\_rec:(('a, 'b) marrow -> 'a res -> 'b expr) -> ('a,'b) marrow\\
\>val mapply:('a,'b) marrow -> 'a -> 'b\\
end\\
\>\\
signature BOX = \\
sig\\
\>type 'a box\\[2mm]
\>val box:'a -> 'a box\\
\>val unbox:'a box -> 'a\\
\>val keyOf:'a box -> int\\
end
\end{codeListingS}}}}

\caption{The signatures for the memo library and boxes.}
\label{fig:sig}
\end{figure}

The interface for the library (\figref{sig}) provides types for
expressions, resources, bangs, products, sums, memoized functions
along with their introduction and elimination forms.  All expressions
have type {\tt 'a expr}, which is a monad with {\tt return} as the
inclusion and various forms of bind operations as elimination forms
{\tt letBang, letx}, and {\tt mcase}.  A resource has type~{\tt 'a
  res}.  The library provides no explicit introduction form for
resources.  Instead, resources are created by {\tt letx}, {\tt mcase},
{\tt mfun\_rec}, and {\tt mfun} primitives.  The elimination form for
resources is {\tt expose} which returns the underlying value of a
resources. 

The introduction and elimination form for bang types are {\tt bang}
and {\tt letBang}.  The introduction and elimination form for product
types are {\tt pair}, and {\tt letx} and {\tt split} respectively.
The {\tt letx} is a bind operation for the monad {\tt expr}; {\tt
  split} is the elimination form for the term context.  The treatment
of sums is similar to product types.  The introduction forms are {\tt
  inl} and {\tt inr}, and the elimination forms are {\tt mcase} and
{\tt choose}; {\tt mcase} is a bind operation for the {\tt expr} monad
and {\tt choose} is the elimination for the term context.


The {\tt mfun} and {\tt mfun\_rec} primitives introduce memoized
functions. The {\tt mfun} primitive takes a function of type \mbox{\tt
  'a res -> 'b expr} and returns the memoized function of type
\mbox{\tt ('a,'b) marrow}; {\tt mfun\_rec} is similar to {\tt mfun}
but it also takes its memoized version as argument.  Note that the
result type does not contain the ``effect'' {\tt expr}---the library
encapsulate memoization effects, which are benign, within the
function.  The elimination form for the {\tt marrow} is the memoized
apply function {\tt mapply}.

In addition to primitives for memoization, the library provides a
facilities for boxing and unboxing of values.  As described in
\secref{framework} boxes enables injecting ordinary types into
indexable types.  \figref{sig} shows the signature for boxes. 

\begin{figure}
\centering{\fbox{
\parbox[t]{1.5in}{
\begin{codeListingS}
XX\=\kill
signature MEMO\_TABLE = \\
sig\\
\>type 'a memotable\\

\>val empty: unit -> 'a memotable\\
\>val extend: 'a memotable -> int list -> ('a option * 'a memotable option)\\
\>val insert: 'a -> 'a memotable -> unit\\
end
\end{codeListingS}}}}
\caption{The signatures for memo tables.}
\label{fig:memo-table}
\end{figure} 

\begin{figure}[h]
\centering{\fbox{
\parbox[t]{1.5in}{
\begin{codeListingS}
XX\=XX\=XX\=XX\=XX\=XX\=XX\=XX\=XX\=XX\=XX\=XX\=XXXXXXXXXXXXXXXXXXX\=\kill
functor BuildMemo (structure MemoTable:MEMO\_TABLE):MEMO = \\
struct\\
\>type 'a expr = int list * (unit -> 'a)\\
\>fun return f = (nil,f)\\[2mm]

\>type 'a res = 'a\\
\>fun res v = v\\
\>fun expose r = r\\[2mm]

\>type 'a bang = 'a * ('a -> int)\\
\>fun bang h t = (t,h)\\
\>fun letBang b f = \\
\>\>let val (v,h) = b\\
\>\>~~~~val (branch,susp) = f v\\
\>\>in ((h v)::branch, susp) end\\[2mm]

\>type ('a,'b) prod = 'a * 'b\\
\>fun pair x y = (x,y)\\
\>fun split p f = f p\\
\>fun letx (p as (x1,x2)) f = f (res x1, res x2) \\[2mm]

\>datatype ('a,'b) sum = INL of 'a | INR of 'b\\
\>fun inl v = INL(v)\\
\>fun inr v = INR(v)\\
\>fun mcase s f g = \\
\>\>let val (lr,(branch,susp)) = case s of\\
\>\>\>\>~~~~~~~~~~~~~~~~~~~~~~~~~~~INL v => (0,f (res v))\\
\>\>\>~~~~~~~~~~~~~~~~~~~~~~~~~~~|~INR v => (1,g (res v))\\
\>\>in \\
\>\>\>(lr::branch,susp) \\
\>\>end\\
\>fun choose s f g = case s of~INL v => f v~|~INR v => g v\\[2mm]

\>type ('a,'b) marrow = 'a -> 'b   \\
\>fun mfun\_rec f = \\
\>\>let val mtable = MemoTable.empty ()\\
\>\>\>\>fun mf rf x = \\
\>\>\>\>\>let val (branch,susp) = f rf (res x)\\
\>\>\>\>\>\>~~val result = case MemoTable.extend mtable branch  of\\
~~~~~~~~~~~~~~~~~~~~~~~~~~~~~(NONE,SOME mtable') =>~~~~~~~~(* Not found *)\\
~~~~~~~~~~~~~~~~~~~~~~~~~~~~~~~let val v = susp ()\\
~~~~~~~~~~~~~~~~~~~~~~~~~~~~~~~~~~~val \_ = MemoTable.insert v mtable'\\
~~~~~~~~~~~~~~~~~~~~~~~~~~~~~~~in v end \\
~~~~~~~~~~~~~~~~~~~~~~~~~~~~|~(SOME v,NONE) => v~~~~~~~~~~(* Found *)\\
\>\>\>\>\>in result end\\
\>\>\>\>fun mf' x = mf mf' x\\
\>\>in\\
\>\>\>mf'\\ 
\>\>end\\[2mm]

\>fun mfun f = ... (* Similar to mfun\_rec *)\\[2mm]

\>fun mapply f v = f v\\[2mm]
end
\end{codeListingS}}}}
\caption{The implementation of the memoization library.}

\label{fig:implementation}
\end{figure}

The library implements memo tables as nested
hash tables as described in \secref{mfl::complexity}.
\figref{memo-table} shows the interface for the memo tables.  The
\ttt{empty} function returns an empty memo table.  The \ttt{extend}
function performs a look up with the provided int (index) list and
returns a pair consisting of the result (if found) and the extended
memo table.  The \ttt{insert} function inserts the provided result
into the specified memo table.

\figref{implementation} shows an implementation of the library.  The
{\tt bang} primitive takes a value and an injective function, called
the index function, that maps the value to an integer, called the
index.  The index of a value is used to key memo tables.  The
restriction that the indices be unique enables implementing memo
tables using hashing.  The primitive {\tt letBang} takes a value {\tt
  b} of bang type and a body.  It applies the body to the underlying
value of {\tt b}, and extends the branch with the index of {\tt b}.
The function {\tt letx} takes a pair {\tt p} and a body. It binds the
parts of the pair to two resources and and applies the body to the
resources; as with the operational semantics, {\tt letx} does not
extend the branch. The function {\tt mcase} takes value {\tt s} of sum
type and a body. It branches on the outer form of {\tt s} and binds
its inner value to a resource.  It then applies the body to the
resource and extends the branch with {\tt 0} or {\tt 1} depending on
the outer form of {\tt s}.  The elimination forms of sums and products
for the term context, {\tt split} and {\tt choose} are standard.

The {\tt return} primitive finalizes the branch and returns its body
as a suspension.  The branch is used by {\tt mfun\_rec} or {\tt mfun},
to key the memo table. If the result is found in the memo table, then
the suspension is disregarded and the result is re-used; otherwise the
suspension is forced and the result is stored in the memo table keyed
by the branch.  The {\tt mfun\_rec} primitive takes a recursive
function {\tt f} as a parameter and ``memoizes'' {\tt f} by
associating it with a memo table.  A subtle issue is that {\tt f} must
calls its memoized version recursively.  Therefore {\tt f} must take
its memoized version as a parameter.  Note also that the memoized
function internally converts its parameter to a resource before
applying {\tt f} to it.

The implementation described here does not check for correct usage.
To incorporate the run-time checks for correct usage, we need a more
sophisticated definition of resources in order to detect when a
resource is exposed out of its context (\ie, function instance).  In
addition, the interface must be updated so that the first parameter of
{\tt letBang}, {\tt letx}, and {\tt mcase}, occurs in suspended form.
This enables updating the state consisting of certain flags before
forcing a term.

\begin{figure}[h]
\centering{\fbox{\parbox{6in}{
\begin{codeListingS}
XX\=XX\=XX\=XX\=XX\=XX\=XX\=XX\=XX\=XX\=XX\=XX\=XXXXXXXXXXXXXXXXXXX\=\kill
structure Examples = \\
struct \\ 
\>type 'a box = 'a Box.box\\[2mm]

\>fun iB v = bang (fn i => i) v\\
\>fun bB b = bang (fn b => Box.keyOf b) b\\[2mm]

\>(** Boxed lists **)\\
\>datatype 'a blist' = NIL~|~CONS of ('a * (('a blist') box))\\
\>type 'a blist = ('a blist') box\\[2mm]

\>(** Hash-cons **)\\
\>fun hCons' (x') = letx (expose x') (fn (h',t') =>\\
~~~~~~~~~~~~~~~~~~~~letBang (expose h') (fn h => letBang (expose t') (fn t => \\
~~~~~~~~~~~~~~~~~~~~return (fn()=> box (CONS(h,t))))))\\
\>fun hCons x = mapply (mfun hCons') x \\[2mm]

\>(** Fibonacci **)\\
\>fun mfib' f (n') = \\
\>\>letBang (expose n') (fn n => \\
\>\>return (fn()=>if n < 2 then n else (mapply f (iB(n-1))) + (mapply f (iB(n-2))) \\
\>fun  mfib n = mapply (mfun\_rec mfib') n\\[2mm]

\>(** Knapsack **)\\
\>fun mks' mks (arg) = \\
\>\>letx (expose arg) (fn (c',l') =>\\
\>\>letBang (expose c') (fn c => \\
\>\>letBang (expose l') (fn l => return (fn () =>\\
\>\>\>case (unbox l) of \\
\>\>\>\>NIL => 0\\
\>\>\>|~CONS((w,v),t) => if (c < w) then mapply mks (pair (iB c) (bB t))\\
\>\>\>\>\>~~~~~~~~~~~~~~~else let val v1 = mapply mks (pair (iB c) (bB t))\\
\>\>\>\>\>\>~~~~~~~~~~~~~~~~~~~~~~val v2 = v + mapply mks (pair (iB (c-w)) (bB t))\\
\>\>\>\>\>~~~~~~~~~~~~~~~~~~~~in if (v1 > v2) then v1 else v2 end))))\\
\>val mks x = mfun\_rec mks'\\[2mm]

\>(** Quicksort **)\\
\>fun mqs () =\\
\>\>let val empty = box NIL\\
\>\>\>\>val hCons = mfun hCons'\\
\>\>\>\>fun  fil f l =\\
\>\>\>\>\>case (unbox l) of\\
\>\>\>\>\>\>NIL => empty\\
\>\>\>\>~~|~CONS(h,t) => if (f h) then (mapply hCons (pair (iB h) (bB (fil f t))))\\
\>\>\>\>\>\>~~~~~~~~~~~~~else (fil f t)\\
\>\>\>\>fun qs' qs (l') = letBang (expose l') (fn l => return (fn () => \\
\>\>\>\>\>\>case (unbox l) of\\
\>\>\>\>\>\>\>NIL => nil\\
\>\>\>\>\>~~|~CONS(h,t) => let val ll = fil (fn x=>x<h) t\\
\>\>\>\>\>\>\>\>~~~~~~~~~~~~~~~val gg = fil (fn x=>x>=h) t\\
\>\>\>\>\>\>\>\>~~~~~~~~~~~~~~~val sll = mapply qs (bB ll)\\
\>\>\>\>\>\>\>\>~~~~~~~~~~~~~~~val sgg = mapply qs (bB gg)\\
\>\>\>\>\>\>\>\>~~~~~~~~~~~in sll@(h::sgg) end))\\
\>\>in mfun\_rec qs' end\\
end
\end{codeListingS}}}}
\caption{Examples from \secref{framework} in the SML library.}
\label{fig:examples}
\end{figure}

\figref{examples} shows the examples from \secref{framework} written
in the SML library.  The examples assume a \ttt{Box} structure that
ascribes to the \ttt{BOX} signature (\figref{sig}).  The \ttt{hCons}
function follows the description closely.  The Fibonacci function
\ttt{mfib} applies its argument to a fresh memoized instance of the
Fibonacci function (\ttt{mfib'}).  As a results \ttt{mfib} allocates a
memo table every time it is called.  Since this table is shared by all
calls to \ttt{mfib'}, \ttt{mfib} runs in linear time.  When {\tt mfib}
finishes, this table can be garbage collected.  As with \ttt{mfib},
the Knapsack function \ttt{mks} also applies its argument to a fresh
memoized instance of the memoized Knapsack function (\ttt{mks'}).
Therefore, when \ttt{mks} returns, the allocated memo table can be
garbage collected.  For Quicksort, we provide a function {\tt mqs}
that returns an instance of memoized Quicksort when applied.  Each
such instance has its own memo table.  Note also that {\tt mqs}
creates a local instance of the hash-cons function so that each
instance of memoized Quicksort has its own memo table for
hash-consing---this table can be garbage collected when \ttt{mqs}
returns.

In the examples, we do not use the sum types provided by the library
to represent boxed lists, because ML sum types suffice for the
considered examples.  In general, one will use the provided sum types
instead of their ML counterparts (for example if an {\tt mcase} is
requires).  The examples in~\figref{examples} can be implemented using
the following definition
of boxed lists.\\
\parbox{3in}{
\begin{codeListingNormal}
XX\=XX\=XX\=XX\=XX\=XX\=XX\=XX\=XX\=XX\=XX\=XX\=XXXXXXXXXXXXXXXXXXX\=\kill
datatype 'a boxlist' = ROLL of (unit, (('a, 'a boxlist' box) prod)) sum\\
type 'a boxlist = ('a boxlist') box
\end{codeListingNormal}
}\\
Changing the code in~\figref{examples} to work with this definition
of boxed lists requires several straightforward modifications.

\section{Discussion}
\label{sec:discussion}

\paragraph{Space and cache management.}
Our framework associates a separate memo table with each memoized
function.  This allows the programmer to control the life-span of memo
tables by conventional scoping.  This somewhat coarse degree of
control is sufficient in certain applications such as in dynamic
programming, but finer level of control may be desirable for
applications where result re-use is less regular.  Such an application
can benefit from specifying a caching scheme for individual memo
tables so as to determine the size of the memo table and the
replacement policy.  We discuss how the framework can be extended to
associate a cache scheme with each memo table and maintain the memo
table accordingly.

The caching scheme should be specified in the form of a parameter to
the {\tt mfun} construct.  When evaluated, this construct will bind
the caching scheme to the memo table and the memo table will be
maintained accordingly.  Changes to the operational semantics to
accommodate this extension is small.  The store $\sigma$ will now map
a label to a pair consisting of a memo table and its caching scheme.
The handling of the {\tt return} will be changed so that the stores do
not merely expand but are updated according to the caching scheme
before adding a new entry.  The following shows the updated return
rule.  Here ${\cal S}$ denotes a caching scheme and $\theta$ denotes a
memo table.  The $\mathcd{update}$ function denotes a function that
updates the memo table to accommodate a new entry by possibly purging
an existing entry.  The programmer must ensure that the caching scheme
does not violate the integrity of the memo table by tampering with
stored values.\\

\smallskip
\parbox{\textwidth}{\centering
\renewcommand{\update}{{\mathcd{update}}}
\newcommand{\scheme}{{\cal S}}
\begin{tabular}{c}
{\mbox{$
\begin{array}{c}
\infer  [\quad\mbox{(Found)}]
        {\eis{\ms}{l}{\br}{\ret{\t}} \ereduces \eos{\v}{\ms}}
        {\ms(l) = (\theta,\scheme) & \theta(\br)=\v} \\[2ex]

\infer  [~\mbox{(Not Found)}]
        {\eis{\ms}{l}{\br}{\ret{\t}} \ereduces \eos{\v}{\msp[l \leftarrow
        \theta'']}}
        {       
         \begin{array}{c}
         \ms(l) = (\theta,\scheme) \quad \theta(\br)\uparrow \\
         \tis{\ms}{\t}  \treduces \tos{\v}{\msp}\\
         \msp(l) = (\theta',\scheme) \quad \theta'' = \update(\theta',\scheme,(\br,\v))
         \end{array}
         }

\end{array}
$}}
\end{tabular}}\\

\smallskip For example, we can specify that the memo table for the
Fibonacci function, shown in \figref{precise-dependences}, can contain
at most two entries and be managed using the least-recently-used
replacement policy.  This is sufficient to ensure that the memoized
Fibonacci runs in linear time.  This extension can also be
incorporated into the type system described in \secref{language}.
This would require that we associate types with memo stores and also
require that we develop a type system for ``safe'' $\mathcd{update}$
functions if we wish to enforce that the caching schemes are safe.

\smallskip
\paragraph{Local  versus non-local dependences.}
Our dependence tracking mechanism only captures ``local'' dependences
between the input and the result of a function.  A local dependence of
a function {\tt f} is one that is created inside the static scope of
{\tt f}.  A non-local dependence of {\tt f} is created when {\tt f}
passes its input to some other function {\tt g}, which examines {\tt
  f}'s input indirectly.  In previous work,
Abadi~\etal~\citep{AbadiLaLe96} and Heydon~\etal~\citep{HeydonLeYu00}
showed program analysis techniques for tracking non-local dependences
by propagating dependences of a function to its caller. They do not
discuss, however, efficiency implications of tracking non-local
dependences.  

Our framework can be extended to track non-local dependences by
introducing an application form for memoized functions in the
expression context.  This extension would, for example, allow for
dependences of non-constant length.  We chose not to support non-local
dependences because it is not clear if its utility exceeds its
efficiency effects.

\section{Conclusion}

We present language techniques for applying memoization selectively
under programmer control.  The approach makes explicit the performance
effects of memoization and yields programs whose running times can be
analyzed using standard techniques. A key aspect of the framework is
that it can capture both control and data dependences between input
and the result of a memoized function.  We show that the approach
accepts a relatively simple implementation by giving an implementation
as a library for the Standard ML language.  The main contributions of
the paper are the particular set of primitives we suggest and the
semantics along with the proofs that it is sound.  We expect that the
techniques can be implemented in any purely-functional language.


\bibliographystyle{abbrvnat}
\bibliography{main}

\begin{thebibliography}{40}
\providecommand{\natexlab}[1]{#1}
\providecommand{\url}[1]{\texttt{#1}}
\expandafter\ifx\csname urlstyle\endcsname\relax
  \providecommand{\doi}[1]{doi: #1}\else
  \providecommand{\doi}{doi: \begingroup \urlstyle{rm}\Url}\fi

\bibitem[Abadi et~al.(1996)Abadi, Lampson, and L\'{e}vy]{AbadiLaLe96}
M.~Abadi, B.~W. Lampson, and J.-J. L\'{e}vy.
\newblock Analysis and caching of dependencies.
\newblock In \emph{International Conference on Functional Programming}, pages
  83--91, 1996.

\bibitem[Acar(2005)]{Acar05}
U.~A. Acar.
\newblock \emph{Self-Adjusting Computation}.
\newblock PhD thesis, Department of Computer Science, Carnegie Mellon
  University, May 2005.

\bibitem[Acar and Ley-Wild(2009)]{AcarLW09}
U.~A. Acar and R.~Ley-Wild.
\newblock Self-adjusting computation with {Delta ML}.
\newblock \emph{Lecture Notes in Computer Science}, 5832/2009:\penalty0 1--38,
  2009.

\bibitem[Acar et~al.(2002)Acar, Blelloch, and Harper]{AcarBlHa02}
U.~A. Acar, G.~E. Blelloch, and R.~Harper.
\newblock Adaptive functional programming.
\newblock In \emph{Proceedings of the 29th Annual {ACM} Symposium on Principles
  of Programming Languages}, pages 247--259, 2002.

\bibitem[Acar et~al.(2003)Acar, Blelloch, and Harper]{AcarBlHa03}
U.~A. Acar, G.~E. Blelloch, and R.~Harper.
\newblock Selective memoization.
\newblock In \emph{Proceedings of the 30th Annual {ACM} Symposium on Principles
  of Programming Languages}, 2003.

\bibitem[Acar et~al.(2006{\natexlab{a}})Acar, Blelloch, Blume, Harper, and
  Tangwongsan]{AcarBlBlHaTa06}
U.~A. Acar, G.~E. Blelloch, M.~Blume, R.~Harper, and K.~Tangwongsan.
\newblock A library for self-adjusting computation.
\newblock \emph{Electronic Notes in Theoretical Computer Science}, 148\penalty0
  (2), 2006{\natexlab{a}}.

\bibitem[Acar et~al.(2006{\natexlab{b}})Acar, Blelloch, Blume, and
  Tangwongsan]{AcarBlBlTa06}
U.~A. Acar, G.~E. Blelloch, M.~Blume, and K.~Tangwongsan.
\newblock An experimental analysis of self-adjusting computation.
\newblock In \emph{Proceedings of the ACM Conference on Programming Language
  Design and Implementation}, 2006{\natexlab{b}}.

\bibitem[Acar et~al.(2007)Acar, Blume, and Donham]{AcarBlDo07}
U.~A. Acar, M.~Blume, and J.~Donham.
\newblock A consistent semantics of self-adjusting computation.
\newblock In \emph{European Symposium on Programming}, 2007.

\bibitem[Acar et~al.(2009)Acar, Blelloch, Blume, Harper, and
  Tangwongsan]{AcarBlBlHaTa09}
U.~A. Acar, G.~E. Blelloch, M.~Blume, R.~Harper, and K.~Tangwongsan.
\newblock An experimental analysis of self-adjusting computation.
\newblock \emph{ACM Trans. Prog. Lang. Sys.}, 32\penalty0 (1):\penalty0
  3:1--3:53, 2009.

\bibitem[Aho et~al.(1974)Aho, Hopcroft, and Ullman]{AhoHoUl74}
A.~V. Aho, J.~E. Hopcroft, and J.~D. Ullman.
\newblock \emph{The Design and Analysis of Computer Algorithms}.
\newblock Addison-Wesley, 1974.

\bibitem[Allen(1978)]{Allen78}
J.~Allen.
\newblock \emph{Anatomy of LISP}.
\newblock McGraw Hill, 1978.

\bibitem[Appel and Gon\c{c}alves(1993)]{AppelGo93}
A.~W. Appel and M.~J.~R. Gon\c{c}alves.
\newblock Hash-consing garbage collection.
\newblock Technical Report CS-TR-412-93, Princeton University, Computer Science
  Department, 1993.

\bibitem[Bellman(1957)]{Bellman57}
R.~Bellman.
\newblock \emph{Dynamic Programming}.
\newblock Princeton Univ. Press, 1957.

\bibitem[Bird(1980)]{Bird80}
R.~S. Bird.
\newblock Tabulation techniques for recursive programs.
\newblock \emph{ACM Computing Surveys}, 12\penalty0 (4):\penalty0 403--417,
  Dec. 1980.

\bibitem[Cohen(1983)]{Cohen83}
N.~H. Cohen.
\newblock Eliminating redundant recursive calls.
\newblock \emph{ACM Transactions on Programming Languages and Systems},
  5\penalty0 (3):\penalty0 265--299, July 1983.

\bibitem[Cook and Launchbury(1997)]{CookLa97}
B.~Cook and J.~Launchbury.
\newblock Disposable memo functions.
\newblock In \emph{Proceedings of Haskell Workshop}, 1997.

\bibitem[Cormen et~al.(1990)Cormen, Leiserson, and Rivest]{CormenLeRi90}
T.~H. Cormen, C.~E. Leiserson, and R.~L. Rivest.
\newblock \emph{Introduction to Algorithms}.
\newblock MIT Press/McGraw-Hill, 1990.

\bibitem[Demers et~al.(1981)Demers, Reps, and Teitelbaum]{DemersReTe81}
A.~Demers, T.~Reps, and T.~Teitelbaum.
\newblock Incremental evaluation of attribute grammars with application to
  syntax-directed editors.
\newblock In \emph{Principles of Programming Languages}, pages 105--116, 1981.

\bibitem[Goto and Kanada(1976)]{GotoKa76}
E.~Goto and Y.~Kanada.
\newblock Hashing lemmas on time complexities with applications to formula
  manipulation.
\newblock In \emph{Proceedings of the 1976 ACM Symposium on Symbolic and
  Algebraic Computation}, pages 154--158, 1976.

\bibitem[Hammer et~al.(2009)Hammer, Acar, and Chen]{HammerAcCh09}
M.~A. Hammer, U.~A. Acar, and Y.~Chen.
\newblock {CEAL}: a {C}-based language for self-adjusting computation.
\newblock In \emph{Proceedings of the 2009 ACM SIGPLAN Conference on
  Programming Language Design and Implementation}, June 2009.

\bibitem[Heydon et~al.(2000)Heydon, Levin, and Yu]{HeydonLeYu00}
A.~Heydon, R.~Levin, and Y.~Yu.
\newblock Caching function calls using precise dependencies.
\newblock In \emph{Proceedings of the 2000 ACM SIGPLAN Conference on
  Programming Language Design and Implementation}, pages 311--320, 2000.

\bibitem[Hilden(1976)]{Hilden76}
J.~Hilden.
\newblock Elimination of recursive calls using a small table of randomly
  selected function values.
\newblock \emph{BIT}, 16\penalty0 (1):\penalty0 60--73, 1976.

\bibitem[Hughes(1985)]{Hughes85}
R.~J.~M. Hughes.
\newblock Lazy memo-functions.
\newblock In \emph{Proceedings 1985 Conference on Functional Programming
  Languages and Computer Architecture}, 1985.

\bibitem[Ley-Wild et~al.(2008)Ley-Wild, Fluet, and Acar]{Ley-WildFlAc08}
R.~Ley-Wild, M.~Fluet, and U.~A. Acar.
\newblock Compiling self-adjusting programs with continuations.
\newblock In \emph{Proceedings of the International Conference on Functional
  Programming}, 2008.

\bibitem[Liu and Stoller(1999)]{LiuSt99}
Y.~A. Liu and S.~D. Stoller.
\newblock Dynamic programming via static incrementalization.
\newblock In \emph{European Symposium on Programming}, pages 288--305, 1999.

\bibitem[Liu et~al.(1998)Liu, Stoller, and Teitelbaum]{LiuStTe98}
Y.~A. Liu, S.~Stoller, and T.~Teitelbaum.
\newblock Static caching for incremental computation.
\newblock \emph{{ACM} Transactions on Programming Languages and Systems},
  20\penalty0 (3):\penalty0 546--585, 1998.

\bibitem[McCarthy(1963)]{McCarthy63}
J.~McCarthy.
\newblock A basis for a mathematical theory of computation.
\newblock In P.~Braffort and D.~Hirschberg, editors, \emph{Computer Programming
  and Formal Systems}, pages 33--70. North-Holland, Amsterdam, 1963.

\bibitem[Michie(1968)]{Michie68}
D.~Michie.
\newblock ``{M}emo'' functions and machine learning.
\newblock \emph{Nature}, 218:\penalty0 19--22, 1968.

\bibitem[Mostov and Cohen(1985)]{MostowCo85}
J.~Mostov and D.~Cohen.
\newblock Automating program speedup by deciding what to cache.
\newblock In \emph{Proceedings of the Ninth International Joint Conference on
  Artificial Intelligence}, pages 165--172, Aug. 1985.

\bibitem[Murphy et~al.(2002)Murphy, Harper, and Crary]{MurphyHaCr02}
T.~Murphy, R.~Harper, and K.~Crary.
\newblock The wizard of {TILT}: Efficient(?), convenient and abstract type
  representations.
\newblock Technical Report CMU-CS-02-120, School of Computer Science, Carnegie
  Mellon University, Mar. 2002.

\bibitem[Norvig(1991)]{Norvig91}
P.~Norvig.
\newblock Techniques for automatic memoization with applications to
  context-free parsing.
\newblock \emph{Computational Linguistics}, pages 91--98, 1991.

\bibitem[Pennings(1994)]{Pennings94}
M.~Pennings.
\newblock \emph{Generating Incremental Attribute Evaluators}.
\newblock PhD thesis, University of Utrecht, Nov. 1994.

\bibitem[Pennings et~al.(1992)Pennings, Swierstra, and Vogt]{PenningsSwVo92}
M.~Pennings, S.~D. Swierstra, and H.~Vogt.
\newblock Using cached functions and constructors for incremental attribute
  evaluation.
\newblock In \emph{Seventh International Symposium on Programming Languages,
  Implementations, Logics and Programs}, pages 130--144, 1992.

\bibitem[{Peyton Jones}(1987)]{Peyton-Jones87}
S.~{Peyton Jones}.
\newblock \emph{The Implementation of Functional Programming Languages}.
\newblock Prentice-Hall, 1987.

\bibitem[Pfenning(1995)]{Pfenning95}
F.~Pfenning.
\newblock Structural cut elimination.
\newblock In D.~Kozen, editor, \emph{Proceedings of the Tenth Annual Symposium
  on Logic in Computer Science}, pages 156--166. Computer Society Press, 1995.

\bibitem[Pfenning and Davies(2001)]{pfenning+:judgmental}
F.~Pfenning and R.~Davies.
\newblock A judgmental reconstruction of modal logic.
\newblock \emph{Mathematical Structures in Computer Science}, 11:\penalty0
  511--540, 2001.

\bibitem[Polakow and Pfenning(1999)]{PolakowPf99}
J.~Polakow and F.~Pfenning.
\newblock Natural deduction for intuitionistic non-commutative linear logic.
\newblock In J.-Y. Girard, editor, \emph{Proceedings of the 4th International
  Conference on Typed Lambda Calculi and Applications (TLCA'99)}, pages
  130--144. Springer-Verlag LNCS 1581, 1999.

\bibitem[Pugh(1988)]{Pugh88}
W.~Pugh.
\newblock \emph{Incremental Computation via Function Caching}.
\newblock PhD thesis, Department of Computer Science, Cornell University,
  August 1988.

\bibitem[Pugh and Teitelbaum(1989)]{PughTe89}
W.~Pugh and T.~Teitelbaum.
\newblock Incremental computation via function caching.
\newblock In \emph{Principles of Programming Languages}, pages 315--328, 1989.

\bibitem[Spitzen and Levitt(1978)]{SpitzenLe78}
J.~M. Spitzen and K.~N. Levitt.
\newblock An example of hierarchical design and proof.
\newblock \emph{Communications of the ACM}, 21\penalty0 (12):\penalty0
  1064--1075, 1978.

\end{thebibliography}


\end{document}